\newtheorem{theorem}{Theorem}
\newtheorem{corollary}{Corollary}[theorem]
\newtheorem{definition}{Definition}
\newtheorem{lemma}{Lemma}
\def\M{\mathcal{M}}
\def\Z{\mathbb{Z}}
\def\C{\mathcal{C}}
\def\D{\mathcal{D}}
\def\L{\mathcal{L}}
\DeclareMathOperator*{\Motimes}{\text{\raisebox{0.25ex}{\scalebox{0.8}{$\bigotimes$}}}}
\newcommand{\ZZ}{\mathbb{Z}}
\newcommand{\be}{\begin{equation}}
\newcommand{\ee}{\end{equation}}
\newcommand{\bc}{\begin{center}}
\newcommand{\ec}{\end{center}}
\newcommand{\nin}{\noindent}
\newcommand{\non}{\nonumber}
\newcommand{\lo}{\overline}
\newcommand{\as}{\mathsf{a}}
\newcommand{\bs}{\mathsf{b}}
\newcommand{\cs}{\mathsf{c}}
\definecolor{dualblue}{RGB}{3,101,192}
\newlength\lrvec@height
\newlength\lrvec@width
\newif\iflrvec@same@height
\def\lrvec{\@ifstar\slrvec@\lrvec@}
\newcommand{\slrvec@}[2][.4ex]{
  \lrvec@same@heighttrue
  \mathpalette\lrvec@@{{#1}{#2}}
}
\newcommand{\lrvec@}[2][.4ex]{
  \lrvec@same@heightfalse
  \mathpalette\lrvec@@{{#1}{#2}}
}
\def\lrvec@@#1#2{\lrvec@@@#1#2}
\def\lrvec@@@#1#2#3{%
  \iflrvec@same@height
    \settoheight{\lrvec@height}{$\m@th#1 \mathbf{T}#3$}
  \else
    \settoheight{\lrvec@height}{$\m@th#1#3$}
  \fi
  \settowidth{\lrvec@width}{$\m@th#1#3$}
  \kern.08em
  \raisebox{#2}{\raisebox{\lrvec@height}{\rlap{%
    \kern-.05em
    \begin{tikzpicture}[<-> /.tip={To[width=.4em, length=.2em]}]
      \draw [<->] (-.05em,0)--(\lrvec@width+.05em,0);
    \end{tikzpicture}%
  }}}%
  #3
  \kern.08em
}
\begin{document}

\title{Transversal non-Clifford gates on qLDPC codes breaking~the~$\sqrt{N}$~distance barrier and quantum-inspired geometry with $\ZZ_2$ systolic freedom}

\author{Guanyu Zhu}
\affiliation{IBM Quantum, T.J. Watson Research Center, Yorktown Heights, NY 10598 USA}

\begin{abstract}
Historically,  a  $\sqrt{N}log^{1/2}(N)$ distance barrier for quantum low-density parity-check (LDPC) codes with 
$N$ qubits persisted for nearly two decades, until the recent discovery of the fibre-bundle code \cite{fiberbundlecode21}. An open question is whether such a distance barrier can be broken while preserving the ability to perform transversal non-Clifford gates.  In this direction, another long-standing distance barrier of $N^{1/3}$ for LDPC stabilizer codes---present since the discovery of the 3D color code---was only recently overcome by a construction achieving an $\Omega(\sqrt{N})$ distance \cite{zhu2025topological}.  The present work further breaks the   $\sqrt{N}$ distance barrier by taking a homological product of three good qLDPC codes, combined with the Freedman–Hastings code-to-manifold mapping  and the triple cup product to implement transversal CCZ gates. The resulting code achieves an  $\Omega(N^{2/3})$  distance (a linear  $X$-distance of $\Theta(N)$) and a dimension of $\Theta(N^{2/3})$, which enables fault-tolerant preparation of $\Theta(N^{1/3})$ independent logical CCZ magic states in a single shot, without distillation (`\textit{magic state fountain}'). This new quantum code also inspires the discovery of a family of exotic $3q$-dimensional manifolds $\mathcal{M}$, which exhibit both a power-law $\mathbb{Z}_2$-($q$, $2q$)-systolic freedom and  $\Theta(vol(\mathcal{M}))$ triple intersection points of $2q$-dimensional submanifolds.

\end{abstract}.

\maketitle

\tableofcontents

\section{Introduction}

Quantum information science has entered a golden age, marked by rapid advances toward building large-scale fault-tolerant quantum computers in recent years. 
A fundamental question is: what is the minimal space-time complexity necessary to perform universal fault-tolerant computation?   Interestingly, this complexity is deeply intertwined with mathematical structures from geometry and topology.

Almost two decades ago, Freedman, Meyer and Luo observed the deep connection between the space overhead of a quantum code and the systolic geometry of the underlying manifold.  In their seminal work \cite{Freedman_systole_2002},   they have constructed the first qLDPC code with distance exceeding the $\sqrt{N}$ distance barrier by a $log^{1/2}(N)$ factor and show the connection to the  $\ZZ_2$-systolic freedom of a manifold, a notion first proposed by Gromov \cite{Gromov:1996}.  Their record was held for nearly two decades until the recent discovery of the Fibre-bundle code by Hastings and Haah \cite{fiberbundlecode21} broke the distance barrier by a power-law factor.   This fiber bundle idea was further developed within the framework of the lift product code \cite{PK:almost_linear} and the balanced product code \cite{Breuckmann:2021_balanced}, until the final achievement of the asymptotically good qLDPC code by Panteleev and Kalachev \cite{pkldpc22}.

Although the quantum memory with optimal space overhead has been achieved by the good qLDPC code, it remains an open question whether there exists a `\textit{good quantum processor}' that achieves constant space-time overhead for universal fault-tolerant computation along with a linear distance.  

In recent years, there has been significant progress for both Clifford and non-Clifford logical gates in qLDPC codes \cite{cohen22, huang2023homomorphic, zhu2023non, xu2024fast, cross2024improved, williamson2024low, swaroop2024universal, golowich2024quantum, lin2024transversal, zhu2025topological}. 
Nonetheless, ever since the discovery of the transversal $T$ gate in 3D color code by Bombin in more than a decade ago  \cite{Bombin:2015jk} (see also Refs.~\cite{Kubica:2015br, Vasmer2019}), there had existed an $N^{\frac{1}{3}}$ distance barrier for transversal non-Clifford gates in topological stabilizer codes and more generally LDPC stabilizer codes (more generally also for constant-depth local circuit) for a long time.  The $N^{\frac{1}{3}}$ distance barrier for the (conventional) topological stabilizer codes, i.e., those defined on the cellulation of an Euclidean geometry, is implied by the Bravyi-Konig bound \cite{Bravyi:2013dx} \footnote{We note that there is no rigorous proof on that yet.}, which states that in order to get a logical gate at the 3rd level of Clifford hierarchy, one needs to define the code on the cellulation of a manifold of dimension
at least 3. For instance, in a 3D torus (with zero curvature), the distance is determined by the minimal length of the logical string operator, which scales as $O(N^{\frac{1}{3}})$.  However, such a constraint is only a consequence of the Euclidean geometry. On the other hand, as we will see, a large class of qLDPC codes can be viewed as non-Euclidean geometries.  More specifically, the fibre bundle idea in Refs.~\cite{fiberbundlecode21, PK:almost_linear, Breuckmann:2021_balanced,  pkldpc22} which implements a twist in the product construction can be applied to break the $N^{\frac{1}{3}}$ distance barrier, as shown very recently in Ref.~\cite{zhu2025topological} which achieves an $\Omega(\sqrt{N})$ distance.

One key insight we need here is that one can build manifolds from quantum codes, as recently shown by Freedman and Hastings \cite{freedman:2020_manifold_from_code}. In particular, they showed that a large class of qLDPC codes based on general chain complexes and expanders, including all the recent fibre-bundle-based constructions in Refs.~\cite{fiberbundlecode21, PK:almost_linear, Breuckmann:2021_balanced,  pkldpc22}, can be mapped to a high-dimensional manifold (with the minimal dimension 11) with bounded local geometry.
This breakthrough erodes the boundary between qLDPC codes defined on general chain complexes and homological qLDPC  codes defined on the cellulation of manifolds.  It also shows an entirely new way of discovering exotic geometries from the combinatorial construction of error correcting codes.
Indeed, it is through this mapping  with the input of the recent codes with distance larger than $\sqrt{N}$ \cite{fiberbundlecode21, PK:almost_linear, Breuckmann:2021_balanced,  pkldpc22} that they are able to build the first manifold with a power-law systolic freedom \cite{freedman:2020_manifold_from_code}. 

To date, there have been mainly two independent threads of exploration of constructing transversal non-Clifford gates on qLDPC codes.  The first is a geometric approach initiated in Ref.~\cite{zhu2023non} which constructed the first high-rate qLDPC code with non-Clifford logical CCZ gates, the 3D quasi-hyperbolic code with an almost-linear dimension $K=\Theta(N/\log N)$ and a distance growing as $d=\Omega(\log N)$. It is also the first application of cup products to high-rate qLDPC codes, and the connection between logical gates and cup products can be traced back to Ref.~\cite{barkeshli2023codimension, barkeshli2024higher}, while the essential idea appeared even earlier from the connection to emergent symmetries of topological order \cite{Yoshida_gate_SPT_2015, Yoshida_global_symmetry_2016, Yoshida2017387, Zhu:2017tr, zhu:2022fractal}. The dimension was later improved to linear, i.e.~$K=\Theta(N)$, by combining 
the quasi-hyperbolic code with the quantum rainbow code \cite{scruby2024quantum}. Later on, by using the mapping from classical and quantum codes to triangulated manifolds following the treatment in Ref.~\cite{freedman:2020_manifold_from_code}, one further reaches a qLDPC code with constant stabilizer weight $w=O(1)$, linear dimension $K=\Theta(N)$ and distance $d=\Omega(\sqrt{N})$ \cite{zhu2025topological}, which breaks the $N^{\frac{1}{3}}$ distance barrier.
 Moreover, the triangulation allows one to define triple cup products, which can be used to construct transversal logical CCZ gates.  Geometrically, the triple cup product of three cocycles summed over the entire manifold counts the $\ZZ_2$ triple intersection number of their Poincar\'e dual cycles. Whenever three logical cycles get a non-trivial triple intersection, the corresponding three logical qubits are acted by a logical CCZ.  Therefore, the number of $\ZZ_2$ triple intersection points corresponds to the number of logical CCZ's, and there are $\Theta(N)$ triple intersection points in the construction from   Ref.~\cite{zhu2025topological}.   

The other independent line of exploration is through an algebraic approach, which uses the hypergraph-product of three algebraic codes based on the classical Reed-Muller codes \cite{golowich2024quantum, lin2024transversal}.  The construction achieves an almost-linear dimension $K=\Theta(N^{1-\epsilon})$, power-law distance $d=\Omega(N^{\frac{1}{3}}/polylog(N))$ and has a quasi-LDPC property with a stabilizer weight of $\Theta(polylog(N))$. The origin of the $polylog(N)$ stabilizer weight and the power-law reduction of the linear dimension is due to the conversion from large-dimensional $\mathbb{F}_q$ qudit code (with growing dimension $q$) to $\ZZ_2$ qubit code  and the intrinsic code parameter constraint in the Reed-Muller code.  This approach also uses cup products (based on $\mathbb{F}_q$-homology instead of $\ZZ_2$-homology) to construct logical gates as in Ref.~\cite{zhu2023non, zhu2025topological}, and generalizes it to quantum sheaf codes \cite{Meshulam:2018}, i.e., generalized Sipser-Spielman codes defined on simplicial or cubical complexes  with local codes.    There is also another work which generalizes the cup product approach to a certain type of more general chain complexes satisfying some combinatorial conditions \cite{breuckmann2024cups}.    

The main difference between the geometric approach in Ref.~\cite{zhu2025topological} and the algebraic approach in Refs.\cite{golowich2024quantum, lin2024transversal} (also  Ref.~\cite{breuckmann2024cups}) is that the latter requires the underlying classical codes to satisfy certain combinatorial conditions, such as the multiplication properties of local codes in Refs.\cite{golowich2024quantum, lin2024transversal} in order to have well-defined cup product on a sheaf or certain type of more general chain complexes.  On the other hand, the geometric approach in  Ref.~\cite{zhu2025topological} is fully topological and insensitive to local combinatorial details. It can take the Tanner graphs of any classical codes or a large classes of quantum codes based on product construction as skeletons to build a manifold with a triangulation  such that the cup products are well-defined.  These triangulations are essentially high-dimensional simplicial complexes, which were also used as alternatives to build asymptotically good classical codes by placing the bits on higher-dimensional simplices and  without using local codes \cite{10.13069/jacodesmath.617235} (based on the random complex constructions in Refs.~\cite{10.1007/s00454-017-9926-3, 10.48550/arxiv.1010.1400, 10.1007/s00493-006-0027-9, 10.48550/arxiv.math/0609773, 10.48550/arxiv.math/0609773}).  This flexibility of choosing input codes, including the good qLDPC codes, allows Ref.~\cite{zhu2025topological} a larger distance $\Omega(\sqrt{N})$ and a constant rate, as well as a constant stabilizer weight (LDPC condition) which is important for the code to have a threshold under circuit-level noise.

Now in order to further break the $\sqrt{N}$ distance barrier, one may need to obtain a geometry with a power-law $\ZZ_2$-systolic freedom, and in addition, one also needs the coexistence of non-trivial triple intersections of large cycles in such a manifold.  This is indeed achieved in the present work.  In the following subsection, we summarize the main results of this paper.

\subsection{Summary of results}

As a preliminary, we first review the cup product formalism  introduced to QEC in Ref.~\cite{zhu2023non} and further developed in Refs.~\cite{zhu2025topological, Hsin2024:classifying, Hsin2024_non-Abelian}  in Sec.~\ref{sec:cup_product}.  In particular, an operator-valued cochain formalism shows how a constant-depth circuit composed of a product of physical CCZ gates can be written in the form of triple cup product of cochains.  Note that throughout this paper, we use a more relaxed definition of transversal gates which allows overlap of the physical gates in the product and is hence equivalent to a constant-depth local circuit. Such a relaxed definition was also used in recent works in Refs.~\cite{golowich2024quantum, lin2024transversal}, and we use the same convention to ease the comparison with other recent results. The constant-depth circuit then gives rise to the logical CCZ applied on a triple of logical qubits whose logical cycles have a non-trivial triple intersection,  as stated and proved in Lemma \ref{lemma_gate_2}.  Note that the proof in this paper is more specifically based on the Calderbank–Shor–Steane (CSS) codes \cite{nielsen_chuang_2010} to facilitate the understanding for computer scientists, which is different from the more general stabilizer  approach in Refs.~\cite{zhu2023non, zhu2025topological, Hsin2024:classifying, Hsin2024_non-Abelian} applicable to more general stabilizer codes including non-Pauli stabilizer models.  In Sec.~\ref{sec:systole}, we review the notions in systolic geometry and the connections to homological quantum codes.

We then introduce our first code construction in Sec.~\ref{sec:subsystem}: the  `\textit{triple good subsystem code}' defined on the triangulation of a 33-manifold $\M^{33}$ as the homological product of three 11-manifolds ($\M^{33}=\M^{11} \times \M'^{11} \times \M''^{11}$), where the qubits are placed on 11-simplices. These 11-manifolds are built from the asymptotically good qLDPC codes constructed in Ref.~\cite{pkldpc22} via the FH mapping (Theorem \ref{theorem:FH}), which has also been used to construct the 3D local code in Ref.~\cite{portnoy2023local}.   These codes are in fact CSS stabilizer codes with bounded-weight stabilizers.  However, there exist  shorter 11-systole of size $\Theta(N^{1/3})$ in these manifolds. To resolve this issue, we can view the code as a subsystem code by treating the logical qubits associated with short cycles as gauge qubits (Lemma \ref{lemma:subsystem}).  In this way, we can retain a code with subsystem-code distance as $d= \Omega(N^{2/3})$, linear $X$-distance $d_X= \Omega(N)$, and the code dimension $K= \Theta(N^{2/3})$. The linear $X$-distance property will be particular useful for systems with biased noise. The code admits $\Theta(N)$ logical CCZ gates due to the presence of $\Theta(N)$ triple intersection points for triples of $22$-cycles from a given homology basis (Theorem \ref{theorem:subsystem}).  We note that although viewed as a subsystem code, the code has only bounded stabilizer weight $w=O(1)$ satisfying the LDPC conditions in contrast to certain typical subsystem codes with growing stabilizer weight such as the Bacon-Shor code \cite{Bacon:2006hw}.  Therefore, these codes are expected to have an error threshold even in the presence of circuit-level noise since the syndrome measurement has a bounded circuit depth. 

Although for practical purpose, this subsystem code construction is good enough, we would still like to understand conceptually whether the more standard CSS subspace code \footnote{`Subspace code'  here means the code space is a subspace of the entire Hilbert space of $N$ qubits.} can break the $\sqrt{N}$ distance barrier.  Geometrically, that will require the underlying manifold having $\ZZ_2$ systolic freedom, which is not the case for the subsystem construction. We resolve these two questions in Sec.~\ref{sec:subspace}, where we have constructed a family of $3q$-dimensional manifolds (with $q\ge 31$) from the product of three manifolds with generally different dimensions, which has $q$-systole and $2q$-systole scaled as $\Omega(N^{2/3})$ and hence achieves a power-law $\ZZ_2$-$(q, 2q)$-systolic freedom (Theorem \ref{theorem:3q-manifold}). Moreover, such manifolds admit $\Theta(N)$ triple intersection points of three $q$-cycles in a given homology basis, which was absent in the first manifold construction with power-law $\ZZ_2$ systolic freedom in Ref.~\cite{freedman:2020_manifold_from_code}. This is because when considering the coarse geometry at large scale, the 11-manifold in Ref.~\cite{freedman:2020_manifold_from_code} is coarsely 2D according to Ref.~\cite{portnoy2023local}, namely it can be embedded into a 2D non-Euclidean space \footnote{The notion of coarse dimension was first introduced by Gromov.}, which hence does not admit triple intersection of non-trivial cycles with large systoles. On the other hand, the 33-manifold and all the $3q$-manifold constructed in this paper are the triple product of the coarsely 2D manifolds, which are hence coarsely 6D, although they have different topological dimensions at small scale. When defining the subspace CSS code on its triangulation by placing the qubits on the $q$-simplices, we obtain a qLDPC code (`\textit{triple good code}') with distance $d=\Omega(N^{2/3})$ and dimension $K= \Theta(N^{2/3})$ that admits $\Theta(N)$ logical CCZ's (Corollary \ref{corollary:subspace}).

Finally, we analyze the detailed logical gate structure from the triple intersection structure of the underlying manifolds in Sec.~\ref{sec:fountain}, which are isomorphic for both the subsystem and subspace code constructions. Moreover, we use these codes for the application of fault-tolerantly prepare logical magic states in a single shot without distillation, dubbed `\textit{magic state fountain}' in Ref.~\cite{zhu2025topological, zhu2023non}.  The fountain can prepare $\Theta(N^{1/3})$ logical CCZ magic states with distance $\Theta(N^{2/3})$.

\section{Preliminaries}

\subsection{Transversal non-Clifford gates on homological codes via triple cup products}
\label{sec:cup_product}

We first briefly review the formalism of performing logical gates via cohomology operations introduced in Ref.\cite{zhu2023non, zhu2025topological, Hsin2024:classifying}.  We consider a CSS code \cite{nielsen_chuang_2010} defined on an $n$-dimensional simplicial complex $\L$  with qubits placed on the $q$-simplices ($1 \le q \le n-1$), i.e., associated with the $\ZZ_2$ $q$-chain group $C_q$.   The $\ZZ_2$ chain group can be viewed as a $\ZZ_2$ vector space (a free $\ZZ_2$ module). The element of the $q$-chain group is the $q$-chain $c_q \in C_q$, which is a finite linear combination of $q$-simplices with coefficients in $\ZZ_2$. The $q$-chain as a $\ZZ_2$ vector can be expanded as: 
$c_q= \sum_{s_q} c_q(s_q)  s_q$,
where $s_q$ represents a $q$-simplex which can be viewed as a basis vector  and $c_q(s_q) \in \{0,1\}$ is the corresponding $\ZZ_2$ coefficient [see Fig.~\ref{fig:triangulation}(a) for illustration].  
We then place $X$- and $Z$-stabilizers on the $(q-1)$-simplices and $(q+1)$-simplices respectively.   The corresponding chain complex is as follows: 
\begin{align}
&C_n \rightarrow \cdots \rightarrow C_{q+1} \xrightarrow[]{\partial_{q+1}=\mathbf{H}_Z^T} C_q \xrightarrow[]{\partial_{q}=\mathbf{H}_X} C_{q-1} \rightarrow \cdots, \cr
&\qquad   \quad \quad \ Z\text{-stabilizer}  \qquad \   
 \text{qubit} \qquad X\text{-stabilizer}
\end{align}   
where $\partial_q:C_q \rightarrow C_{q-1}$ represents the boundary map [see Fig.~\ref{fig:triangulation}(b)], and $\mathbf{H}_Z$ and $\mathbf{H}_X$ are the parity check matrices  associated with the $Z$- and $X$-stabilizers respectively.     

Now we also introduce the dual description with a $\ZZ_2$ cochain group $C^q$.  A $\ZZ_2$-valued $q$-cochain $c^q \in C^q$ is a function from the set of $q$-simplices to $\ZZ_2$ [see Fig.~\ref{fig:triangulation}(c) for illustration].   The cochain group $C^q$ can be considered as the dual $\ZZ_2$ vector space of that of $C_q$.  One can also identify the $\ZZ_2$ chain and cochain groups, i.e., $C^q = C_q$.   The $q$-cochain as a $\ZZ_2$ vector and can be expanded as $c^q $$= $$ \sum_{s_q} c^q(s_q)  \tilde{s}^q$. Here, $\tilde{s}^q$ is an indicator $q$-cochain that takes value $1$ at the $q$-simplex $s_q$ and $0$ otherwise, which can also be viewed as a basis vector, and the  $\ZZ_2$ coefficient $c^q(s_q) \in \{0,1\}$ is the value of the $q$-cochain on simplex $s_q$.   We can then obtain the cochain complex as a dual description of the same code: 
\begin{align}
&C^n \leftarrow \cdots \leftarrow C^{q+1} \xleftarrow[]{d^{q}=\mathbf{H}_Z} C^q \xleftarrow[]{d^{q-1}=\mathbf{H}_X^T} C^{q-1} \leftarrow \cdots, \cr
&\qquad   \quad \quad \ Z\text{-stabilizer}  \qquad \   
 \text{qubit} \qquad X\text{-stabilizer}
\end{align}  
where $d^q:C^q \rightarrow C^{q+1}$ represents the co-boundary map [see Fig.~\ref{fig:triangulation}(d) for illustration].

\begin{figure}[t]
\includegraphics[width=1\columnwidth]{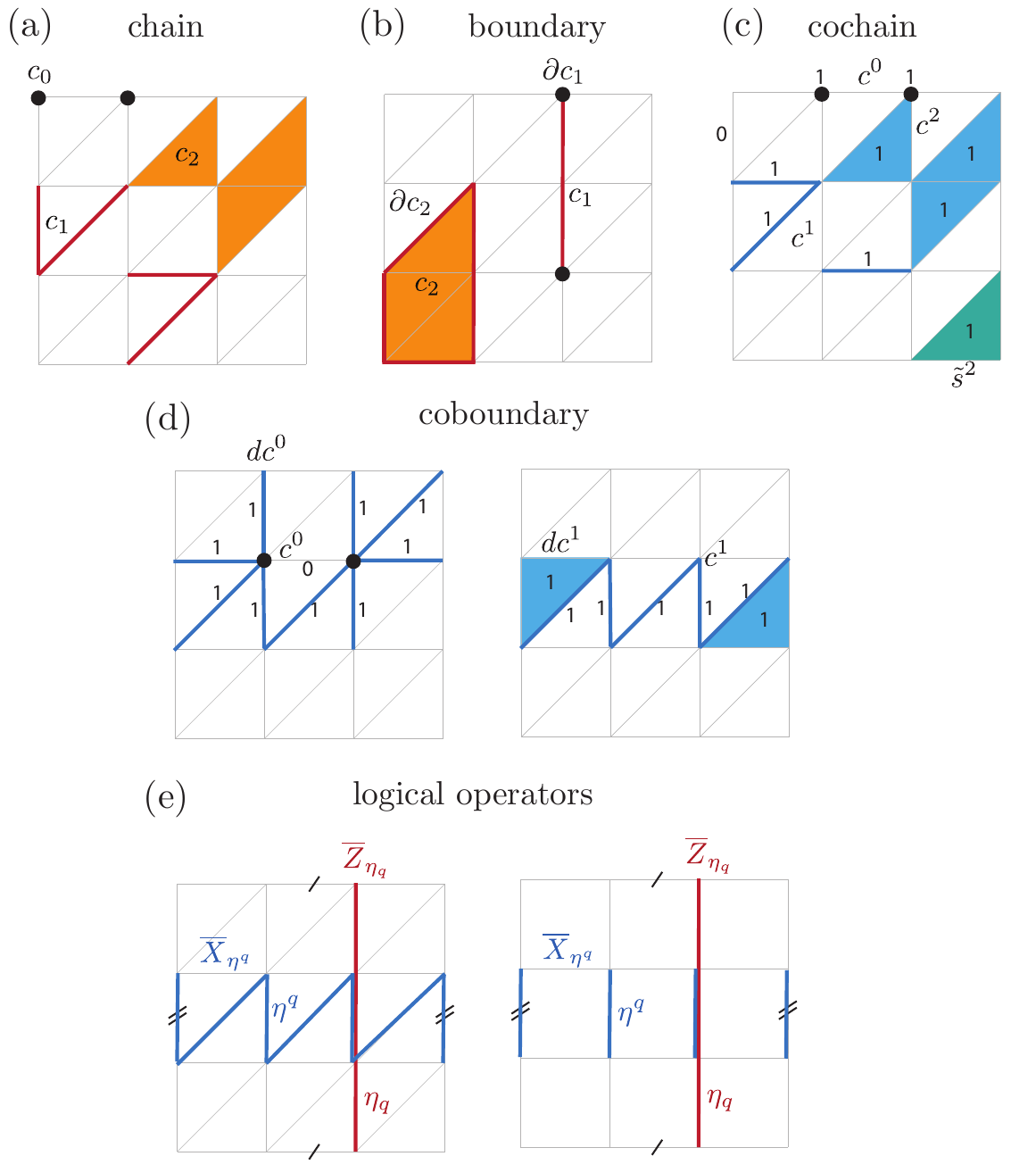}
\caption{(a) Illustration of $\ZZ_2$ $q$-chains as a linear combination of $q$-simplices, including a 0-chain $c_0$, a 1-chain $c_1$ and a 2-chain $c_2$. The vertices (0-simplices), edges (1-simplices) and triangles (2-simplices) with non-zero coefficients are highlighted respectively. (b) Illustration of the boundary map $\partial$ acting on a 1-chain $c_1$ and a 2-chain $c_2$ respectively.  Here, $\partial c_1$ are sum of the end points of the occupied edges in $c_1$, and $\partial c_2$ are sum of the edges surrounding the occupied triangles in $c_2$. (c)  Illustrations of $q$-cochain. The highlighted 0-, 1-, and 2-simplices take value 1 for $c^0$, $c^1$ and $c^2$ respectively, while other simplices take value 0. The green triangles illustrates an indicator 2-cochain $\tilde{s}^2$ which has value 1 on the triangle $s_2$ and value 0 otherwise. (d) Illustration of the coboundary map $d$ acting on a 0-chain $c^0$ (left) and a 1-chain $c^1$ (right).  The highlighted edges and triangles take value 1 for $dc^0$ and $dc^1$ respectively. (e) Illustration of the logical operators $\lo{Z}_{\eta_q}$ $(q=1)$ and its conjugate logical-$X$ operator $\lo{X}_{\eta^q}$ which overlap  on a single edge on a simplicial complex (left) and a square complex (right).  We have imposed periodic boundary conditions so that the simplicial (square) complex serves as the triangulation (cellulation) of a torus.  
 }\label{fig:triangulation}
\end{figure}

Following the convention in the literature, we call the above quantum  code a \textit{$(q,n-q)$-homological code} where $q$ and $n-q$ is the dimension of the cycles where the logical-$Z$ and -$X$ operators are supported respectively. In physics, this type of code also corresponds to a higher-form ($q$-form) $\ZZ_2$ gauge theory.  We now introduce the operator-valued  $\ZZ_2$ $q$-cochains $\hat{a}^q$  with the coefficient being an operator with eigenvalues in $\{0,1\}$, which can be physically interpreted as $q$-form electric gauge fields and the hat $\hat{\cdot}$ indicates that they are operators and hence quantum variables.   We can also expand the operator-valued cochain as $\hat{a}^q $$= $$ \sum_{s_q} \hat{a}(s_q)  \tilde{s}^q$.  The coefficient $\hat{a}(s_q)$ corresponds to a Pauli-$Z$ operator as 
\be\label{eq:a_Pauli}
(-1)^{\hat{a}^q (s_q)}= Z(s_q),
\ee
which has eigenvalues in $\{-1,1\}$.

 Now we can also write any eigenstate in the $Z$-basis as a cochain eigenstate, which can be in turn created from the all-zero state by applying Pauli-$X$ supported on the cochain $c^q$:
\be\label{eq:cochain_eigen}
\ket{c^q} = \prod_{s_q \in c^q} X(s_q)  \ket{00 \cdots 0},
\ee
where $s_q \in c^q$ means $c^q(s_q)=1$. We can hence have an alternative definition of the operator-valued cochains via their eigenstates $\ket{c^q}$:
\be\label{eq:cochain_egenvalues}
\hat{a}^q \ket{c^q} = c^q \ket{c^q},
\ee
where the $q$-cochain $c^q$ as a classical variable  stores the eignvalues of $\hat{a}^q$.  We can also equivalently write down the operator form of $\hat{a}^q$ as
\be
\hat{a}^q :=  c^q \ketbra{c^q}. 
\ee

The logical-$Z$ operator can be defined via the sum of the operator-valued q-cochain $\hat{a}^q$ along a $q$-cycle $\eta_q \in H_q(\L;\ZZ_2)$, i.e., a $q$-cochain satisfying $\partial_q \eta_q $$=$$0$ [$H_q(\L;\ZZ_2) = \text{Ker}(\partial_q)/\text{Img}(\partial_{q+1})$ represents the $q$-th $\ZZ_2$-homology group]: 
\be
\lo{Z}_{\eta_{q}} =(-1)^{\int_{\eta_{q}} \hat{a}^q}=\prod_{s_q \in \eta_{q}} Z(s_q). 
\ee
Here, the discrete sum ${\int_{\eta_{q}} \hat{a}^q} \equiv \sum_{s_q \in \eta_{q}} \hat{a}^q{(s_q)}$ can be interpreted as a chain-cochain paring \footnote{We note that $\int$ here does not represent an integral but a discrete sum.}, i.e., the inner product of the vector associated with the $q$-chain $\eta_q$ and the dual vector associated with the q-cochain $\hat{a}^q$, which equals to the sum of $\hat{a}^q$ over all the $q$-simplices $s_q$ that has the coefficient with value $\eta (s_q)=1$ [see Fig.~\ref{fig:triangulation}(e) for illustration].    

The $Z$-distance of the code counts the smallest weight of any logical-$Z$ operator representative, and is defined to be
\be\label{eq:Z-distance}
d_Z = \min \{ |\eta_q|: \eta_q \neq 0 \in H_q(\L;\ZZ_2)\},
\ee
where $|\eta_q|$ counts the number of $q$-simplices $s_q$ supported on $\eta_q=\sum_{s_q} c(s_q) s_q$ with non-zero coefficient $c(s_q)=1$, i.e., 
  \be\label{eq:Hamming_weight}
   |\eta_q|= \bigg|\sum_{s_q} c(s_q) s_q\bigg|=\sum_{s_q} |c(s_q)|,
  \ee
  which can also be interpreted as the hamming weight since we are dealing with $\ZZ_2$ coefficient.

Similarly we can introduce the operator-valued $\ZZ_2$ $q$-chains $\hat{b}_q$ with eigenvalues in $\{0,1\}$, which can be physically interpreted as $q$-form magnetic gauge fields.  The coefficient of each $q$-simplex $s_q$ corresponds to the Pauli-$X$ operator as
\be\label{eq:b_Pauli}
(-1)^{\hat{b}_q (s_q)}= X(s_q),
\ee
which has eigenvalues in $\{-1,1\}$.
The logical-$X$ operator can be defined via the sum of the operator-valued $q$-chain $\hat{b}_q$ along a $q$-cocycle $\eta^q$ which is a $q$-cochain satisfying $d \eta_q $$=$$0$ ($d$ represents the co-boundary), namely  
\be
\lo{X}_{\eta^{q}} =(-1)^{\int_{\eta^{q}} \hat{b}_q}=\prod_{s_q \in \eta_{q}} X(s_q), 
\ee
[see Fig.~\ref{fig:triangulation}(e)].
Here, $\int_{\eta^{q}} \hat{b}_q \equiv \sum_{s_q \in \eta^{q}} \hat{b}_q{(s_q)} $ is again a chain-cochain pairing,  where $s_q \in \eta^q$ means $\eta^q(s_q)=1$.

The $X$-distance of the code counts the smallest weight of any logical-$X$ operator representative, and is defined to be
\be\label{eq:X-distance}
d_X = \min \{ |\eta^q|: \eta_q \neq 0 \in H_q(\L;\ZZ_2)\},
\ee
where the Hamming weight $|\eta^q|$ counts the number of $q$-simplices $s_q$ supported on $\eta^q=\sum_{s_q} \eta^q(s_q) \tilde{s}_q$ with non-zero value $\eta^q(s_q)=1$, i.e., 
  \be\label{eq:cocycle_Hamming_weight}
   |\eta^q|= \bigg|\sum_{s_q} \eta^q(s_q) \tilde{s}_q\bigg|=\sum_{s_q} |\eta^q(s_q)|.
  \ee
The overall distance for the CSS code is hence 
\be
d=\min\{d_Z, d_X\}.
\ee

Since the quantum code we consider is a CSS code, the associated  code space is
\be\label{eq:CSS_state}
\C := \text{Span} \bigg\{ \overline{\ket{\eta^{q}}} = \frac{1}{|B^{q}|} \sum_{\xi \in B^{q}} \ket{\eta^{q} + \xi} \bigg| \ \eta^{q} \in H^{q}(\L; \Z_2)  \bigg\}.
\ee
Here, $\eta^{q}\in H^{q}(\L; \Z_2)$ represents the $q$-cocycle (classical variable) and can be further expanded in a cocycle basis $\{\alpha^q\}$ as 
\be
\eta^{q} = \sum_{\alpha^q} n_\alpha \alpha^q,
\ee
where $n_\alpha \in \{0,1\}$ is the winding number for each basis $q$-cocycle $\alpha^q$.   Now $\overline{\ket{\eta^{q}}} \equiv \Motimes_{\alpha}\overline{\ket{ n_\alpha  }}$  is the logical-$Z$ eigenstate with a $+1$ eigenvalue for a logical-$Z$ operator $\lo{Z}_{\eta_{q}}$ acted  along the conjugate $q$-cycle $\eta_{q}$, i.e.,      
\be
\lo{Z}_{\eta_{q}} \overline{\ket{\eta^{q}}} \equiv  (-1)^{\int_{\eta_{q}} \hat{a}^q} \ \overline{\ket{\eta^{q}}}= \overline{\ket{\eta^{q}}}. 
\ee
In Eq.~\eqref{eq:CSS_state}, $\ket{\eta^q}$ represents a classical codeword state in the $Z$-basis: 
\be\label{eq:codeword_state}
\ket{\eta^q} = \lo{X}_{\eta^q}\ket{00\cdots 0} = \prod_{s_q \in \eta^q}X(s_q) \ket{00\cdots 0},
\ee
which is a specific case of the cochain eigenstate $\ket{c^q}$ in Eq.~\eqref{eq:cochain_eigen}.
Note that $\ket{\eta^q}$ can also be considered as a cocycle eigenstate, since it is the eigenstate of the operator-valued cocycle $\hat{a}^q$  according to Eq.~\eqref{eq:cochain_egenvalues}:
\be\label{eq:cocycle_egenvalues}
\hat{a}^q \ket{\eta^q} = \eta^q \ket{\eta^q}.
\ee
Finally, $\xi = d\zeta \in B^q$ is a $q$-coboundary, where $\zeta \in C^{q-1}$ is an arbitrary $(q-1)$-cochain.  Therefore, the $q$-cocycles  $\eta^q$ and $\eta^q+\xi$ are both in the same cocycle class $[\eta^q]$. According to Eq.~\eqref{eq:codeword_state}, the deformed classical codeword state can be written as
\be
\ket{\eta^q + \xi} = \big(\prod_{s_q \in \xi} X(s_q)\big) \ket{\eta^q},
\ee
where $\prod_{s_q \in \xi} X(s_q)$ is nothing but an $X$-stabilizer supported on $\xi$.
Note that one can also have the alternative definition for the logical-$Z$ eigensate in terms the action of the logical-$X$ operators on the all-zero logical-$Z$ eigenstate (with $n_\alpha=0$ for any $q$-cocycle $\alpha^q$):  
\be
\overline{\ket{\eta^{q}}} \equiv \Motimes_{\alpha}\overline{\ket{ n_\alpha  }}    =\lo{X}_{\eta^q} \Motimes_{\alpha}\lo{\ket{0}}_{\alpha}.
\ee

\begin{figure}[t]
\includegraphics[width=1\columnwidth]{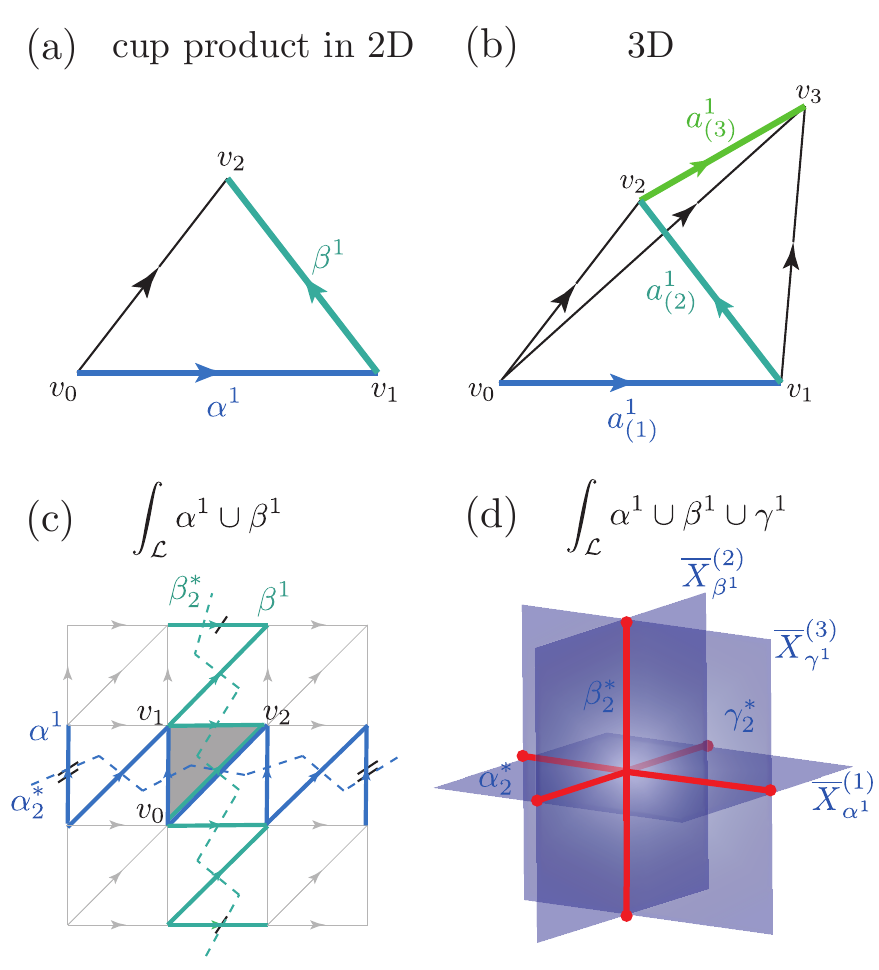}
\caption{(a)  Illustration of the rule of evaluating cup product between two 1-cochains on a 2-simplex, i.e., $(\alpha^1$$\cup$$ \beta^1)([v_0,v_1, v_{2}]) 
=\alpha^1([v_0,v_1])\beta^1([v_1, v_2])$. Note that the arrows on the edges point from vertices of lower order to those of higher order. (b) Illustration of evaluating a triple cup product of three operator-valued 1-cocyles on a 3-simplex, which gives a CCZ gates acting on three qubits located on edges $[v_0, v_1]$, $[v_1, v_2]$ and $[v_2, v_3]$ belonging to the three code copies $\C_{(1)}$, $\C_{(2)}$, and $\C_{(3)}$ respectively. (c) Illustration of the sum of cup product of three cocycles over the entire 2D triangulation $\L$, where the arrows indicate the vertex ordering. We can see that $\alpha^1 \cup \beta^1$ only evaluates to 1 on the highlighted 2-simplex $[v_0, v_1, v_2]$, while it evalutes to 0 for all the others 2-simplices.  The cup product sum evaluates the intersection number of the  Poincar\'e dual cycles (dashed) $\alpha^*_1$ and $\beta^*_1$.   (d) Illustration of the sum of triple cup product of three 1-cocyles $\alpha^1$, $\beta^1$ and $\gamma^1$ in 3D, which corresponds to the triple intersection of the three Poincare\'e dual 2-cycles $\alpha^*_2$, $\beta^*_2$ and $
\gamma^*_2$.  The 1-cocycles or their dual 2-cycles are the support of the logical-$X$ operators. }
\label{fig:cup_product}
\end{figure}

In order to study the logical gates, we then introduce the cup product `$\cup$' which corresponds to the following bilinear map on the cochain groups:
\be
\cup :  C^p(\L) \times C^q(\L) \rightarrow C^{p+q}(\L),
\ee 
where $C^p$ represents the $p^\text{th}$ cochain group.
This means the cup product between a $p$-cochain $\alpha^p \in C^{p}$ and a $q$-cochain $\beta^q \in C^{p}$ gives rise to a $(p+q)$-cochain  $\alpha^p \cup \beta^q \in C^{p+q}$.  

This cup product  can be explicitly evaluated on a $(p+q)$-simplex $[v_0,v_1,\cdots,v_{p+q}]$ as \cite{Hatcher:2001ut} 
\begin{align}\label{eq:cup_def}
 & (\alpha^p \cup \beta^q)([v_0,v_1,\cdots, v_{p+q}]) \cr
=&\alpha^p([v_0,v_1,\cdots, v_{p}])\beta^q([v_p, v_{p+1},\cdots, v_{p+q}])~. 
\end{align}
Here, we can choose an arbitrary ordering for the vertices $v_i$ on a ($p+q$)-simplex as $v_0 < v_1 < v_2 \cdots < v_{p+q}$, which hence specifies how to pick the $p$-simplices and $q$-simplicies in the above evaluation. The $p=q=1$ case is illustrated in Fig.~\ref{fig:cup_product}(a). Note that the cup product also induces a bilinear operation on the cohomology groups:
\be\label{eq:bilinear}
\cup :  H^p(\L) \times H^q(\L) \rightarrow H^{p+q}(\L).
\ee

We introduce a new unitary acting on  three copies of CSS codes $\C_{(1)}$, $\C_{(2)}$ and $\C_{(3)}$  (not necessarily identical copies), where the qubits are placed on the $q_1$-, $q_2$- and $q_3$-simplices respectively satisfying $q_1+q_2+q_3=n$. The associated electric gauge fields are operator-valued cochains $\hat{a}_{(1)}^{q_1}$, $\hat{a}_{(2)}^{q_2}$ and $\hat{a}_{(3)}^{q_3}$ respectively, which corresponds to the Pauli $Z$'s on the three copies of codes respectively.  The corresponding unitary can be expressed via the sum of triple cup product of the operator-valued cochains as 
\be\label{eq:cup_higher_form}
U = (-1)^{\int_{\L} \hat{a}_{(1)}^{q_1} \cup \hat{a}_{(2)}^{q_2} \cup \hat{a}_{(3)}^{q_3}}.
\ee
In this paper, we take the sum over all the $n$-simplices $s_n$ on the entire simplicial complex $\L$ which corresponds to the triangulation of a $n$-manifold $\M^n$ \footnote{More general case of simplicial complex beyond the manifold triangulation has been studied in Ref.~\cite{zhu2025topological}.}.
Note that the exponent is again a chain-cochain paring since it can be interpreted as the sum over an $n$-chain $\Sigma_n =\sum_{s_n} s_n$  consisting of all the $n$-simplices $s_n$, i.e., $\int_\L \bullet \equiv \int_{\Sigma_n} \bullet$ \ .  Note that since $\L$ tirangulates a closed manifold, $\Sigma_n$ is also an $n$-cycle, i.e., $ \partial  \Sigma_n =0$.   Note that the triple cup product induces a trilinear map on the homology groups:
\be\label{eq:trilinear_map}
\cup : H^{q_1}(\L) \times H^{q_2}(\L) \times H^{q_3}(\L) \rightarrow H^{q_1+q_2+q_3}(\L).
\ee

One can hence use the general rule in Eq.~\eqref{eq:cup_def} to evaluate the sum of cup product  as follows:
\begin{widetext}
\begin{align}\label{eq:logical_gate_higher_form}
\non U=& (-1)^{\int_{[v_0,v_1, \cdots, v_n] \in \L} \hat{a}_{(1)}^{q_1}([v_0,v_1, \cdots v_{q_1}])\hat{a}_{(2)}^{q_2}([v_{q_1}, \cdots, v_{q_1+q_2}]) \hat{a}_{(3)}^{q_3}([v_{q_1+q_2}, \cdots, v_n])} \\
=&\prod_{[v_0,v_1, \cdots, v_n] \in \L} \text{CCZ}^\text{(1,2,3)}([v_0,v_1, \cdots v_{q_1}],[v_{q_1}, \cdots, v_{q_1+q_2}], [v_{q_1+q_2}, \cdots, v_n]).
\end{align}
\end{widetext}
We can see that $U$ is a constant-depth circuit composed of a product of physical CCZ gates acting on a triple of qubits belonging to the three different copies of codes $\C_{(1)}$, $\C_{(2)}$ and $\C_{(3)}$ respectively.  For each $n$-simplex $[v_0,v_1, \cdots, v_n]$, the three qubits from three different code copies are located in the $q_1$-simplex $[v_0,v_1, \cdots, v_{q_1}]$, $q_2$-simplex $[v_{q_1}, \cdots, v_{q_1+q_2}]$ and $q_3$-simplex $[v_{q_1+q_2}, \cdots, v_n]$ respectively.  For the simplest case $q_1=q_2=q_3=1$ (3D simplicial complex), the CCZ gate for each 3-simplex $[v_0, v_1, v_2, v_3]$ is $\text{CCZ}^{(1,2,3)}([v_0, v_1],[v_1, v_2],[v_2, v_3])$, as illustrated in Fig.~\ref{fig:cup_product}(b).

We now present the following lemma:
\begin{lemma}\label{lemma_gate_2}
The unitary $U = (-1)^{\int_{\L} \hat{a}_{(1)}^{q_1} \cup \hat{a}_{(2)}^{q_2} \cup \hat{a}_{(3)}^{q_3}}$ acting on three copies of CSS codes defined on a $n$-simplicial complex $\L$ $(n=q_1+q_2+q_3)$ is a constant-depth local quantum circuit that implements collective logical CCZ gates.
\end{lemma}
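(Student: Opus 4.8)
The plan is to establish two things: (i) $U$ is a constant-depth local circuit, and (ii) $U$ acts as a product of logical CCZ gates on the triples of logical qubits whose logical-$Z$ cycles have non-trivial $\ZZ_2$ triple intersection. Claim (i) is essentially immediate from the expansion in Eq.~\eqref{eq:logical_gate_higher_form}: $U$ is a product of physical CCZ gates, one for each $n$-simplex of $\L$, with the three qubits of each CCZ lying on prescribed sub-simplices of that $n$-simplex in the three code copies. Since $\L$ triangulates a manifold with bounded local geometry, each qubit (a $q_i$-simplex) is contained in only $O(1)$ many $n$-simplices, so each physical qubit participates in $O(1)$ gates; hence the circuit can be parallelized to constant depth, and it is geometrically local with respect to the simplicial adjacency structure.

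The substantive part is (ii). First I would verify that $U$ preserves the code space. It suffices to check that $U$ commutes with the $X$-stabilizers of all three copies (the $Z$-stabilizers commute trivially since $U$ is diagonal in the $Z$-basis). Using the operator-cochain formalism, conjugating $U$ by an $X$-stabilizer on copy $(1)$ amounts to shifting $\hat{a}_{(1)}^{q_1} \mapsto \hat{a}_{(1)}^{q_1} + d\zeta$ for a $(q_1-1)$-cochain $\zeta$; the change in the exponent is $\int_\L (d\zeta) \cup \hat{a}_{(2)}^{q_2} \cup \hat{a}_{(3)}^{q_3}$. Invoking the Leibniz rule $d(\zeta \cup \hat a_{(2)}^{q_2} \cup \hat a_{(3)}^{q_3}) = (d\zeta)\cup\hat a_{(2)}^{q_2}\cup\hat a_{(3)}^{q_3} + \zeta \cup d(\hat a_{(2)}^{q_2}\cup \hat a_{(3)}^{q_3})$ together with $d\hat a_{(2)}^{q_2}=d\hat a_{(3)}^{q_3}=0$ on the code space (the states are cocycle eigenstates, Eq.~\eqref{eq:cocycle_egenvalues}), and Stokes' theorem $\int_\L d(\cdots)=\int_{\partial \Sigma_n}(\cdots)=0$ since $\Sigma_n$ is a cycle, I conclude the exponent is unchanged mod $2$; the same argument handles copies $(2)$ and $(3)$. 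Thus $U$ is a valid logical operator.

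Next I would compute the logical action. On a basis logical-$Z$ eigenstate $\overline{\ket{\eta_{(1)}^{q_1}}}\otimes\overline{\ket{\eta_{(2)}^{q_2}}}\otimes\overline{\ket{\eta_{(3)}^{q_3}}}$, pick the representative codewords $\ket{\eta_{(i)}^{q_i}}$; then $U$ multiplies the state by $(-1)^{\int_\L \eta_{(1)}^{q_1}\cup\eta_{(2)}^{q_2}\cup\eta_{(3)}^{q_3}}$. Because $U$ commutes with the $X$-stabilizers, this sign is independent of the choice of cocycle representatives within each class, so it only depends on $[\eta_{(i)}^{q_i}]\in H^{q_i}(\L;\ZZ_2)$ and is computed by the trilinear cup-product pairing of Eq.~\eqref{eq:trilinear_map} evaluated on the fundamental class. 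Expanding each $\eta_{(i)}^{q_i}$ in a fixed cocycle basis, $\eta_{(i)}^{q_i} = \sum_{\alpha} n_\alpha^{(i)} \alpha_{(i)}^{q_i}$, trilinearity of the cup product gives
\be
\int_\L \eta_{(1)}^{q_1}\cup\eta_{(2)}^{q_2}\cup\eta_{(3)}^{q_3} = \sum_{\alpha,\beta,\gamma} n_\alpha^{(1)} n_\beta^{(2)} n_\gamma^{(3)} \ \Lambda_{\alpha\beta\gamma} \pmod 2,
\ee
where $\Lambda_{\alpha\beta\gamma} := \int_\L \alpha_{(1)}^{q_1}\cup\beta_{(2)}^{q_2}\cup\gamma_{(3)}^{q_3}$ is the $\ZZ_2$ triple intersection number of the Poincar\'e dual cycles. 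Since a physical CCZ on three qubits is diagonal with eigenvalue $(-1)^{n_1 n_2 n_3}$, this sign pattern is exactly that of the product $\prod_{\alpha,\beta,\gamma:\,\Lambda_{\alpha\beta\gamma}=1}\overline{\text{CCZ}}_{\alpha\beta\gamma}$ on the logical qubits; evaluating on the full logical $Z$-basis pins down $U$ as this collective logical CCZ up to a possible diagonal logical-$Z$/phase correction, which one checks vanishes (or absorbs) by the same representative-independence argument.

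The main obstacle I anticipate is the bookkeeping in the representative-independence step—carefully tracking that the Leibniz/Stokes argument closes mod $2$ on the \emph{operator-valued} cochains simultaneously for all three copies, and confirming there is no residual lower-order logical Pauli correction (as can happen for transversal $T$/CCZ constructions). Handling this cleanly likely requires the observation that on the code space the operator-valued cochains act as genuine cocycles, so all higher cup-product correction terms (the ones that would generate logical $CZ$ or $Z$) are exact and integrate to zero over the closed manifold.
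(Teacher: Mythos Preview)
Your proof is correct and closely parallels the paper's, though you organize the code-preservation step differently. The paper acts $U$ directly on each term $\ket{\eta^{q_1}+\xi}\otimes\ket{\eta^{q_2}+\xi'}\otimes\ket{\eta^{q_3}+\xi''}$ of the CSS superposition in Eq.~\eqref{eq:CSS_state_triple}, invokes the trilinear map Eq.~\eqref{eq:trilinear_map} to write the exponent as $\int_\L \eta^{q_1}\cup\eta^{q_2}\cup\eta^{q_3} + \int_\L d\omega$, and kills the second piece by Stokes, so that every term in the superposition receives the same phase. You instead verify commutation with the $X$-stabilizers via the Leibniz rule and Stokes; the paper itself flags this route as the alternative ``more concise proof \dots\ by the commutation relation with the stabilizer group'' used in the earlier references. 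Both arguments collapse to the same Stokes identity on the closed $n$-chain $\Sigma_n$; your version is slightly more portable to non-CSS stabilizer settings, while the paper's version makes the CSS codeword structure explicit. Your final worry about residual logical $CZ$ or $Z$ corrections is unnecessary here: since the three operator-valued cochains live on \emph{distinct} code copies, the exponent is genuinely trilinear in the three cocycle classes with no self-cup cross terms, so the basis expansion in Eq.~\eqref{eq:logical_CCZ_higher} terminates exactly at the CCZ level with no lower-order tail.
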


\begin{proof}
We first check the action of $U$ on three copies of CSS codes defined on the triangulation $\L$, where the tensor product code space can be defined as
\begin{widetext}
\begin{align}\label{eq:CSS_state_triple}
\non \C_{(1)} \otimes \C_{(2)} \otimes \C_{(3)} :=& \text{Span} \bigg\{ \overline{\ket{\eta^{q_1}}}  \otimes \overline{\ket{\eta^{q_2}}}  \otimes \overline{\ket{\eta^{q_3}}} \propto  \sum_{\xi \in B^{q_1}} \sum_{\xi' \in B^{q_2}} \sum_{\xi'' \in B^{q_3}} \ket{\eta^{q_1} + \xi} \otimes \ket{\eta^{q_2} + \xi'} \otimes \ket{\eta^{q_3} + \xi''}\bigg| \ \eta^{q_1} \in H^{q_1}(\L; \Z_2), \\ 
& \eta^{q_2} \in H^{q_2}(\L; \Z_2), \eta^{q_3} \in H^{q_3}(\L; \Z_2) \bigg\}.
\end{align}
\end{widetext}

According to Eq.~\eqref{eq:cocycle_egenvalues},  each  classical codeword state in the above superposition is the eigenstate of the unitary $U$:
\begin{align}
& (-1)^{\int_{\L} \hat{a}_{(1)}^{q_1} \cup \hat{a}_{(2)}^{q_2} \cup \hat{a}_{(3)}^{q_3}}{\ket{\eta^{q_1}}}  \otimes {\ket{\eta^{q_2}}}  \otimes {\ket{\eta^{q_3}}} \cr
=&(-1)^{\int_{\L} \eta^{q_1} \cup \eta^{q_1} \cup \eta^{q_3}}  \ket{\eta^{q_1}  } \otimes \ket{\eta^{q_2} } \otimes \ket{\eta^{q_3} }, 
\end{align}
with the eigenvalue being a phase factor $(-1)^{\int_{\L} \eta^{q_1} \cup \eta^{q_1} \cup \eta^{q_3}}$.
Similarly,  we have
\begin{align}\label{eq:cup_shifted}
& (-1)^{\int_{\L} \hat{a}_{(1)}^{q_1} \cup \hat{a}_{(2)}^{q_2} \cup \hat{a}_{(3)}^{q_3}}{\ket{\eta^{q_1} + \xi}}  \otimes {\ket{\eta^{q_2}+ \xi'}}  \otimes {\ket{\eta^{q_3} + \xi''}} \cr
=&(-1)^{\int_{\L} (\eta^{q_1}+\xi ) \cup (\eta^{q_1}+\xi') \cup (\eta^{q_3}+\xi'')}  \cr
& \cdot {\ket{\eta^{q_1} + \xi}}  \otimes {\ket{\eta^{q_2}+ \xi'}}  \otimes {\ket{\eta^{q_3} + \xi''}},
\end{align}
for any coboundary $\xi \in B^{q_1}, \xi' \in B^{q_2}$ and $  \xi'' \in B^{q_3}$. According to the trilinear operation on cohomology  induced by the cup product  in Eq.~\eqref{eq:trilinear_map}, we know that
\be
(\eta^{q_1}+\xi ) \cup (\eta^{q_1}+\xi') \cup (\eta^{q_3}+\xi'') = \eta^{q_1} \cup \eta^{q_1} \cup \eta^{q_3}  +  d \omega,  
\ee
where $d\omega$ is a $n$-coboundary, i.e., $d^2 \omega =0$.  
The exponent in Eq.~\eqref{eq:cup_shifted} can hence be written as:
\begin{align}
& \int_{\L} (\eta^{q_1}+\xi ) \cup (\eta^{q_2}+\xi') \cup (\eta^{q_3}+\xi'')   \cr
=& \int_{\L} \eta^{q_1}  \cup \eta^{q_2} \cup \eta^{q_3}+ \int_{\L} d \omega^n \cr
=& \int_{\L} \eta^{q_1}  \cup \eta^{q_2} \cup \eta^{q_3}.
\end{align}
In the second equality, we have used the Stokes theorem that
\be
\int_{\L} d \omega^n = \int_{\partial \L}  \omega \equiv \int_{\partial \Sigma_n}  \omega^n = 0,
\ee
since $\Sigma_n$ is an $n$-cycle and hence has no boundary: $\partial \Sigma_n$$=$$0$.   Therefore, when acting $U$ on an arbitrary logical-$Z$ state $\overline{\ket{\eta^{q_1}}}  \otimes \overline{\ket{\eta^{q_2}}}  \otimes \overline{\ket{\eta^{q_3}}}$, the phase factor 
for each term in the superposition in Eq.~\eqref{eq:CSS_state_triple} is the same which then contributes to an overall phase factor, i.e., 
\be
U \overline{\ket{\eta^{q_1}}}  \otimes \overline{\ket{\eta^{q_2}}}  \otimes \overline{\ket{\eta^{q_3}}} = (-1)^{\int_{\L} \eta^{q_1} \cup \eta^{q_1} \cup \eta^{q_3}} \overline{\ket{\eta^{q_1}}}  \otimes \overline{\ket{\eta^{q_2}}}  \otimes \overline{\ket{\eta^{q_3}}}.
\ee
We can hence conclude that $U$ preserves the codes space $\C_{(1)} \otimes \C_{(2)} \otimes \C_{(3)}$.
Note that a more concise proof of this statement by the commutation relation with the stabilizer group (without using the specific form of the CSS code states) was also given in Ref.~\cite{zhu2023non, zhu2025topological, Hsin2024:classifying}.

Now we can re-express the circuit $U$ using the cohomology basis $\{\alpha^{q_1}\}$, $\{\beta^{q_2}\}$, and $\{\gamma^{q_3}\}$ to derive the corresponding logical gate:
\begin{align}\label{eq:logical_CCZ_higher}
 U =& \prod_{\alpha^{q_1}, \beta^{q_2}, \gamma^{q_3}} (-1)^{\int_{\L} (\hat{n}_\alpha  \alpha^{q_1} )\cup (\hat{m}_\beta  \beta^{q_2}) \cup (\hat{l}_\gamma  \gamma^{q_3})} \cr
   =&\prod_{\alpha^{q_1}, \beta^{q_2}, \gamma^{q_3}}\left[(-1)^{ \hat{n}_\alpha \hat{m}_\beta \hat{l}_\gamma}\right]^{\int_{\L} \alpha^{q_1} \cup \beta^{q_2} \cup \gamma^{q_3}} \cr
   =&\prod_{\alpha^{q_1}, \beta^{q_2}, \gamma^{q_3}} \overline{\text{CCZ}}[(\alpha^{q_1}; 1), (\beta^{q_2}; 2),(\gamma^{q_3}; 3)]^{\int_{\L} {\alpha^{q_1} \cup \beta^{q_2} \cup \gamma^{q_3}}}. \cr
\end{align}
The cup product sum in the exponent now corresponds to a 3-fold $\ZZ_2$ intersection number of the Poincar\'e dual cycles (denoted by `$*$'): 
\begin{align}\label{eq:intersection_poincare}
& \int_{\L} {\alpha^{q_1} \cup \beta^{q_2} \cup \gamma^{q_3}} =|\alpha^*_{n-q_1}  \cap \beta^*_{n-q_2} \cap  \gamma^*_{n-q_3}|.
\end{align}
Note that these cocycles and their Poincar\'e dual cycles are the support of the logical-$X$ operators, i.e., $\lo{X}^{(1)}_{\alpha^1}$, $\lo{X}^{(2)}_{\beta^1}$ and $\lo{X}^{(3)}_{\gamma^1}$ ($\lo{X}^{(1)}_{\alpha^*_2}$, $\lo{X}^{(2)}_{\beta^*_2}$ and $\lo{X}^{(3)}_{\gamma^*_2}$) in the original (dual) triangulations $\L$ ($\L^*$) respectively. The Poincar\'e duality corresponds to the isomorphim $H^i(\L; \ZZ_2)$$\cong$$ H_{n-i}(\L^*; \ZZ_2)$ where $n$ is the total space dimension.
\end{proof}
An illustration of the double intersection $\int_\L \alpha^1 \cup \beta^1$ and triple intersection $\int_\L \alpha^1 \cup \beta^1 \cup \gamma^1$ on a 2D and 3D triangulations are  illustrated in Fig.~\ref{fig:cup_product}(c) and (d) respectively.

For completeness and later use, we also introduce the conjugate cycle basis $\{\alpha_{q_1}\}$, $\{\beta_{q_2}\}$, and $\{\gamma_{q_3}\}$  of the above cocycle basis, which are the support of logical-$Z$ operators  $\{\lo{Z}_{\alpha_{q_1}}^{(1)}\}$, $\{\lo{Z}_{\beta_{q_2}}^{(2)}\}$ and $\{\lo{Z}_{\gamma_{q_3}}^{(3)}\}$. The correspondence between the conjugate pair of bases is due to the isomorphism between $\ZZ_2$ cohomology and homology $H^q(\L; \ZZ_2) \cong H_q(\L; \ZZ_2)$ given by the universal coefficient theorem \cite{Hatcher:2001ut}.  Note that in our notation $\alpha_{q_1}$ is the unique conjugate basis cycle of the basis cocycle $\alpha^{q_1}$ with the same label $\alpha$. The conjugate pair intersect on odd number of edges, as shown  by the following chain-cochain pairing:
\be
\int_{\alpha_q} \alpha^q \equiv \sum_{s_q} \alpha_q(s_q) \alpha^q(s_q) =1,
\ee
 The corresponding logical-$Z$ and -$X$ operators $\lo{Z}_{\alpha_q}$ and $\lo{X}_{\alpha^q}$ hence anticommute.  Note that a basis cycle and cocycle which are not in the same conjugate pair (i.e., with the same label) do not have non-trivial intersection, which can be expressed by the more general formula:
 \be\label{eq:conjugate_intersection}
\int_{\alpha'_q} \alpha^q \equiv \sum_{s_q} \alpha'_q(s_q) \alpha^q(s_q) =\delta_{\alpha, \alpha'}.
\ee

\subsection{Systolic geometry and freedom}\label{sec:systole}

We start with a set of definitions of systole in Riemannian geometries.

\begin{definition} \cite{Freedman_systole_2002}
    We define the Riemannian $q$-systole $(q \in \mathbb{N})$ of a Riemannian $r$-manifold $\M$ to be:
    \be
    sys_q(\M^r) = \inf_{\alpha_q \neq 0} \text{area}_q(\alpha_q),
    \ee
  where $\eta_q$ is a smooth oriented $q$-cycle belonging to a nontrivial $\Z$-homology class $[\eta_q] \neq 0 \in H_q(\M^r; \Z)$ and $\text{area}_q$ represents the $q$-area of $\eta_q$.  
\end{definition}

\begin{definition} \cite{Freedman_systole_2002}
    Similarly, we define the Riemannian $\Z_2$-$q$-systole $(q \in \mathbb{N})$ of a Riemannian $r$-manifold $\M^r$ to be:
    \be
    sys_q(\M^r; \ZZ_2) = \inf_{\eta_q \neq 0} \text{area}_q(\eta_q),
    \ee
  where $\eta_q$ is a smooth unoriented $q$-cycle belonging to a nontrivial $\Z_2$-homology class $[\eta_q] \neq 0 \in H_q(\M^r; \Z_2)$.  
\end{definition}

For a 2D Riemannian surface, we have 1-systoles defined to be the shortest essential loops (1-cycles). There is a famous systolic inequality proven by Loewner in 1949 for a torus $T^2$:
\be\label{eq:Loewner}
\frac{\left(sys_1(T^2)\right)^2}{area(T^2)} \le \frac{2}{\sqrt{3}},
\ee
where the equality holds only for the flat torus modled on a regular hexagon.
Moreover, the same inequality holds for the $\ZZ_2$ systole.  Note that this inequality essentially provides the upper bound on the distance scaling for the toric code, i.e., $d \le O(\sqrt{N})$, since the total number of qubits $N$ is proportional to the area of the torus and the distance $d$ is proportional to the 1-systole.

The l.h.s. of Eq.~\eqref{eq:Loewner} is called systolic ratio (SR).  More generally, for a Riemannian $r$-manifold $\M^r$ and $p+q=r$, we have the following definition:
\begin{definition}\cite{freedman:2020_manifold_from_code}
The $K$-$(p,q)$-Riemannian systolic ratio of $\M^r$ is defined to be:
\be
K\text{-}(p,q)\text{-}\text{SR}(\M^r) = \inf_{\eta_p, \eta_q \neq 0} \frac{\text{area}_p(\eta_p)\cdot \text{area}_q(\eta_q)}{vol(\M^r)},
\ee
where $\eta_p$ and $\eta_q$ belong to the non-trivial homology classes $[\eta_p] \neq0 \in H_p(\M^r; K)$ and $[\eta_q] \neq 0 \in H_q(\M^r; K)$, and $K$ is a ring. 
\end{definition}
\nin Note that for our purpose, we consider $K=\Z$ (integer) and $K=\ZZ_2$ (integer mod 2).

Since our present paper focuses on quantum codes, we will also need to consider the combinatorial systolic geometry defined on the triangulation (cellulation) of a manifold.  We now define the combinatorial $\ZZ_2$ systole in the following, which is what we focus on in this paper.  For that, one can essentially replace all the above definitions by the following replacement:
\be
area_q(\eta_q) = |\eta_q|, \quad vol(\M^r) = |\L|_r
\ee
Here $|\eta_q|$ counts the number of $q$-simplices $s_q$ supported on $\eta_q$ with non-zero coefficient, i.e., the Hamming weight, as defined in Eq.~\eqref{eq:Hamming_weight}. The combinatorial version of the volume $vol(\M^r)$ is counting the number of (top) $r$-simplices in the triangulation $\L$ of the manifold $\M^r$,  denoted by $|\L|_r$.  More details about the correspondence between the Riemannian and combinatorial geometries can be found in Refs.~\cite{Freedman_systole_2002, freedman:2020_manifold_from_code}.
\begin{definition} \cite{Freedman_systole_2002}
    We define the combinatorial $\Z_2$-$q$-systole $(q \in \mathbb{N})$ of a triangulated $r$-manifold $\M^r$ to be:
    \be
    sys_q(\M^r; \ZZ_2) = \inf_{\eta_q \neq 0} |\eta_q|,
    \ee
  where the q-cycle $\eta_q$ belongs to a nontrivial $\Z_2$-homology class $[\eta_q] \neq 0 \in H_q(\M^r; \Z_2)$. 
\end{definition}
\nin We can have a similar definition for the systole with $\Z$ coefficients.

For later use, we also define the cosystole as:
\begin{definition} 
    We define the combinatorial $\Z_2$-$q$-cosystole  of a triangulated $r$-manifold $\M^r$ to be:
    \be
    sys^q(\M^r; \ZZ_2) = sys_{r-q}(\M^r; \ZZ_2)= \inf_{\eta^q \neq 0} |\eta^q|,
    \ee
  where the q-cocycle $\eta^q$ belongs to a nontrivial $\Z_2$-cohomology class $[\eta^q] \neq 0 \in H^q(\M^r; \Z_2)$. 
\end{definition}
\nin Note that the first equality in the above equation is due to Poincar\'e duality. Here, $|\eta^q|=\sum_{s_q}|\eta^q(s_q)|$ is the Hamming weight of the cocycle  [see Eq.~\ref{eq:cocycle_Hamming_weight}]. 

Compared to Eqs.~\eqref{eq:Hamming_weight} and \eqref{eq:cocycle_Hamming_weight}, we can see the correspondence of the $q$-sysole and $q$-cosystole for a $(q,r-q)$-homological code:
\be
d_Z= sys_q(\M^r; \ZZ_2); \quad  d_X=sys^q(\M^r; \ZZ_2).
\ee

Now the combinatorial version of the systolic ratio is defined as follows:
\begin{definition}\cite{freedman:2020_manifold_from_code}
The $K$-$(p,q)$-combinatorial systolic ratio of a manifold $\M^r$ $(p+q=r)$ with triangulation $\L$ is defined to be:
\be\label{eq:combinatorial_SR}
K\text{-}(p,q)\text{-}\text{SR}(\M^r) = \inf_{\eta_p, \eta_q \neq 0} \frac{|\eta_p|\cdot |\eta_q|}{|\L|_r},
\ee
where $\eta_p$ and $\eta_q$ belong to the non-trivial homology classes $[\eta_p] \neq0 \in H_p(\M^r; K)$ and $[\eta_q] \neq 0 \in H_q(\M^r; K)$, and $K$ is a ring. 
\end{definition}
Now for both the Riemannian and combinatorial cases, we can define the systolic freedom as follows: 
\begin{definition}\cite{freedman:2020_manifold_from_code}
An $r$-manifold $\M^r$ has a $K$-$(p, q)$-systolic freedom, $p+q=r$, if $\M^r$ admits a sequence $\{i\}$ of Riemannian metrics (triangulations) so that 
\be
K\text{-}(p,q)\text{-}\text{SR}(\M^r_{(i)}) \rightarrow  \infty.
\ee

\end{definition}
One can further quantify the systolic freedom as follows:
\begin{definition}\cite{freedman:2020_manifold_from_code}\label{def:power-law}
An $r$-manifold $\M^r$ has a power-law $K$-$(p, q)$-systolic freedom if 
\be
K\text{-}(p,q)\text{-}\text{SR}(\M^r_{(i)}) = \Omega(vol(\M^r_{(i)})^\alpha),
\ee
for some $\alpha >0$.

\end{definition}
For the oriented case, there are many examples for a power-law $\ZZ$-systolic freedom, see e.g.,~Ref.~\cite{Gromov:1996}.  However, for many of these examples, the $\ZZ_2$-systolic freedom does not exist since the unoriented cycles can be much shorter. Therefore, the $\ZZ_2$-systolic freedom is usually harder to achieve, and it was even conjectured by Gromov that $\ZZ_2$-systolic freedom did not exist \cite{Katz:1998}.  The construction in Ref.~\cite{Freedman_systole_2002} gave the first counter-example to this conjecture, but only exhibits a weak  polylog $\ZZ_2$-systolic freedom: $\ZZ_2$-$(1,2)$-$SR=\Omega\left(log^{1/2}(vol(\M^3))\right)$.  The power-law systolic freedom was only discovered two decades later in the recent work by Freedman and Hastings \cite{freedman:2020_manifold_from_code} using their code-to-manifold mapping with the input of the recent qLDPC codes breaking the $\sqrt{N}\log^{1/2}(N)$ distance barrier \cite{fiberbundlecode21, PK:almost_linear, Breuckmann:2021_balanced,  pkldpc22}.

For the purpose of transversal non-Clifford gate, we will construct a manifold not only having power-law systolic freedom, but also the with the presence of non-trivial triple intersection between cycles with large systoles [see Eq.~\eqref{eq:intersection_poincare}], which was not known to exist.   For example, the manifold constructed in Ref.~\cite{freedman:2020_manifold_from_code} is coarsely 2D according to Ref.~\cite{portnoy2023local} \footnote{This means the manifold can be embedded in a 2D non-Euclidean space.} and hence cannot support a non-trivial triple intersection between large cycles.

\section{Construction of the triple good subsystem codes}
\label{sec:subsystem}

In this section we introduce the triple good subsystem codes formed by a triple product of manfolds built from good qLDPC codes, which admits transversal CCZ gates.  

The key of the code-to-manifold mapping in  Ref.~\cite{freedman:2020_manifold_from_code} is to first lift the $\ZZ_2$-chain complex of the input quantum code to a $\Z$-chain complex.  One can then turn the qubits and checks in the input quantum code into dressed handles, and attach the handles with different indices according to the lifted boundary map. In the lowest dimensional example (11D), one has the following cellular chain complex to describe the handle attachment:
\begin{align}
&C_{11} \rightarrow \cdots \rightarrow C_6 \rightarrow  C_{5} \xrightarrow[]{\partial_{5}=\mathbf{H}_Z^T} C_4 \xrightarrow[]{\partial_{4}=\mathbf{H}_X} C_{3} \rightarrow C_2 \rightarrow \cdots, \cr
&\qquad   \quad \quad \quad \qquad  Z\text{-check} \qquad \ \text{qubit} \quad \ \   X\text{-check}    \cr
\end{align}
where the $X$-checks, qubits and $Z$-checks correspond to the dressed 3-handles, 4-handles and 5-handles respectively. 
The attachement of these handles gives rise to a handle-body $H$, and one can then take the double of the handlebody, i.e., glue the handlebody with an identical copy along their common boundary with an identity map:  $\M=\D H =H \cup_{id_{\partial H}} H$.  This completes the construction of a closed manifold $\M$.  The details of this construction can be found in Refs.~\cite{freedman:2020_manifold_from_code, zhu2025topological}.  There is also an excellent pedagogical review of this code-to-manifold mapping in Ref.~\cite{guemard2025lifting}.

Let us first introduce the following theorem essentially obtained from Ref.~\cite{freedman:2020_manifold_from_code} (Theorem 1.2.1) with the additional input from Ref.~\cite{pkldpc22},  which was also used to construct the 3D local code in Ref.~\cite{portnoy2023local} (restated as Theorem~5) \footnote{\label{footnote:freedman}As has been clarified below Theorem 5 in Ref.~\cite{portnoy2023local}, the original Theorem 1.2.1 in Ref.~\cite{freedman:2020_manifold_from_code} has a $polylog(m)$ reduction in the rate and distance due to the additional requirement that the underlying manifold is simply connected for the interest of systolic geometry.  When dropping this additional requirement which is unnecessary for the present paper, the proof in Ref.~\cite{freedman:2020_manifold_from_code} gives the optimal parameters without the $polylog(m)$ reduction.}: 
\begin{theorem}\label{theorem:FH} $($Freedman and Hastings \cite{freedman:2020_manifold_from_code}$)$ \textbf{Building manifolds from quantum codes}: 
Given the good qLDPC code $\bar{\C}$ from Ref.~\cite{pkldpc22} with the parameters $[[n, \Theta(n), \Theta(n)]]$ as an input and for any chosen dimensions $q \ge 4$ and $r \ge 2q +3 $, there exists a mapping from the input code $\bar{\C}$ to an associated triangulated $r$-dimensional $(r\ge 11)$ manifold $\M^{r}$ satisfying the following properties:
\begin{enumerate}
\item 
$\M^r$ has a bounded local geometry, i.e., each vertex in its triangulation $\L$ is adjacent to $O(1)$ simplices;  
\item 
$vol(\M^{r})= \Theta(n)$;
\item
$b_q=\text{dim}(H_q(\M^r; \ZZ_2)) = b_{r-q}= \Theta(n)$;
\item 
$sys_q(\M^r; \ZZ_2)=sys^{r-q}(\M^r; \ZZ_2) = \Theta(n)$,  \\
$sys^q(\M^r; \ZZ_2)=sys_{r-q}(\M^r; \ZZ_2) = \Theta(n)$.
\end{enumerate}
\end{theorem}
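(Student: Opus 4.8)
The plan is to run the Freedman--Hastings handle-decomposition construction (reviewed in Refs.~\cite{freedman:2020_manifold_from_code, guemard2025lifting}) with the asymptotically good qLDPC code $\bar{\C}$ of Ref.~\cite{pkldpc22} as the input chain complex, and to push its parameters $K=\Theta(n)$, $d=\Theta(n)$, $w=O(1)$ through each geometric operation. First I would lift the $\ZZ_{2}$-chain complex of $\bar{\C}$ to a $\ZZ$-chain complex by choosing signs on the incidence numbers so that $\partial^{2}=0$ holds over $\ZZ$ with every coefficient of absolute value $O(1)$; this is possible because $\bar{\C}$ is LDPC, so each row and column of a boundary matrix has $O(1)$ support. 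I would then realize the lifted complex as a handlebody $H$: each $X$-check becomes a dressed $(q-1)$-handle, each qubit a dressed $q$-handle $D^{q}\times D^{r-q}$, each $Z$-check a dressed $(q+1)$-handle, the attaching maps having degrees equal to the lifted incidence numbers. Since those degrees and all check weights are $O(1)$, every handle meets only $O(1)$ others; triangulating each standard piece $D^{k}\times D^{r-k}$ and each gluing collar with $O(1)$ simplices and forming the double $\M^{r}=\D H=H\cup_{\mathrm{id}_{\partial H}}H$ yields a closed triangulated $r$-manifold with bounded local geometry (property~1) and $\mathrm{vol}(\M^{r})=\Theta(\text{number of handles})=\Theta(n)$ (property~2). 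The hypotheses $q\ge 4$ and $r\ge 2q+3$ are what put us in the stable range: the attaching spheres then have codimension at least $3$, so general position makes all handle attachments embedded with immaterial isotopy class, and the homological and systolic arguments below get the room they need.

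\textbf{Betti numbers (property~3).} The handlebody $H$ deformation retracts onto a CW complex whose cellular chain complex is a degree shift of the three-term chain complex of $\bar{\C}$ with qubits moved to degree $q$; hence $H_{q}(H;\ZZ_{2})\cong\ker\mathbf{H}_{X}/\mathrm{im}\,\mathbf{H}_{Z}^{T}$ is the logical-$Z$ space, of rank $K=\Theta(n)$, while the $\ZZ_{2}$-homology of $H$ in the neighbouring degrees of this window is controlled by the (co)kernels of the two parity-check maps. I would then run Mayer--Vietoris on $\M^{r}=H\cup_{\partial H}H$: in the stable range the inclusion $\partial H\hookrightarrow H$ is highly connected relative to degree $q$, so the sequence pins $H_{q}(\M^{r};\ZZ_{2})$ to rank $\Theta(n)$, and Poincar\'e duality on the closed manifold $\M^{r}$ gives $b_{r-q}=b_{q}=\Theta(n)$. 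As the footnote to the statement records, dropping the simple-connectivity requirement of the original Theorem~1.2.1 removes the $\mathrm{polylog}(n)$ losses, so the rate stays $\Theta(1)$.

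\textbf{Systoles (property~4 --- the main obstacle).} The upper bounds are the easy half: a minimum-weight nontrivial logical-$Z$ representative of $\bar{\C}$ maps under the retraction above to a nontrivial $q$-cycle of $\M^{r}$ of weight $O(d)=O(n)$, so $sys_{q}(\M^{r};\ZZ_{2})=O(n)$, and the dual construction from a minimum-weight logical-$X$ representative gives $sys^{q}(\M^{r};\ZZ_{2})=O(n)$; Poincar\'e duality converts these into $sys^{r-q}=sys_{q}=O(n)$ and $sys_{r-q}=sys^{q}=O(n)$. The hard half, and where I expect the bulk of the effort, is the matching lower bound $sys_{q}(\M^{r};\ZZ_{2})=\Omega(n)$ and its cocycle counterpart. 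Given any nontrivial simplicial $\ZZ_{2}$ $q$-cycle $z\subset\M^{r}$, I would first use a Mayer--Vietoris cut-and-paste along the neck $\partial H$ to replace $z$ by a homologous cycle of no larger weight lying entirely inside one copy of $H$ (or, in the degenerate case, inside $\partial H$); the real content is that in the stable range this relocation costs only a constant factor independent of $n$, which is precisely where $r\ge 2q+3$ and bounded local geometry enter. A cycle supported in $H$ projects by cellular approximation to a chain in the degree-$q$ part of the lifted complex whose $\ZZ_{2}$-reduction is a nontrivial code cycle, hence of Hamming weight at least $d=\Omega(n)$; since each handle carries only $O(1)$ simplices, $|z|$ is at least a constant times this algebraic weight. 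The cocycle bound runs identically with $\mathbf{H}_{Z}$ for $\mathbf{H}_{X}$ and $d_{X}$ for $d_{Z}$. Two points demand care: making the cut-and-paste lose only $O(1)$ uniformly in $n$, which forces the stable-range hypothesis; and running the entire argument with $\ZZ_{2}$ rather than $\ZZ$ coefficients so that no short unoriented cycle slips in --- this is automatic here because $\bar{\C}$ is a $\ZZ_{2}$ code whose $\ZZ_{2}$-distance feeds the estimate, but it is exactly the subtlety behind Gromov's eventually-disproved conjecture that $\ZZ_{2}$-systolic freedom could not exist.
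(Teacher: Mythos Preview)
The paper does not actually prove this theorem: it is stated as a quotation of Theorem~1.2.1 of Freedman--Hastings \cite{freedman:2020_manifold_from_code} (with the good-code input of \cite{pkldpc22} and the polylog losses removed by dropping simple-connectivity, as the accompanying footnote explains). The paper's only contribution here is the brief paragraph preceding the theorem, which sketches the mechanism --- lift the $\ZZ_{2}$ chain complex to $\ZZ$, realize checks and qubits as dressed handles of indices $q-1,q,q+1$, then double the handlebody $\M=\D H=H\cup_{\mathrm{id}_{\partial H}}H$ --- and then defers all details to \cite{freedman:2020_manifold_from_code} and the pedagogical review \cite{guemard2025lifting}.

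Your proposal is therefore not competing with a proof in the paper; it is an expanded version of that same preparatory sketch. The ingredients you list (bounded-weight lift, $O(1)$ handle intersections giving bounded geometry and $\Theta(n)$ volume, retraction of $H$ to the shifted code complex, Mayer--Vietoris across the double for Betti numbers, and the ``project a short cycle back to a codeword'' argument for the systole lower bound) are the correct skeleton of the Freedman--Hastings proof and match the paper's one-paragraph summary. The only point I would flag is that the genuinely delicate step --- relocating an arbitrary $\ZZ_{2}$ $q$-cycle in $\M^{r}$ to one copy of $H$ at $O(1)$ multiplicative cost --- is more involved than a single Mayer--Vietoris cut-and-paste; in \cite{freedman:2020_manifold_from_code} this is handled with care and is where the dimensional hypothesis $r\ge 2q+3$ does real work. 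Your outline acknowledges this as ``the main obstacle,'' which is the right instinct, but be aware that filling it in is the substance of the cited paper rather than a routine exercise.
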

Note that the above theorem also gives rise to the existence of a thickened $(q, r-q)$-homological qLDPC code defined on the triangulation $\L$ of the manifold $\M^r$ with code parameters $[[\Theta(n), \Theta(n), \Theta(n)]]$.
The lowest-dimensional example is $q=4$ and $r=11$, corresponding to a thickened $(4, 7)$-homological qLDPC code defined on the triangulation  of the 11-manifold $\M^{11}$.

We now construct a product manifold $\widetilde{\M}=\M^{11} \times \M'^{11} \times \M''^{11} $ with dimension $33$, along with the corresponding product simplicial complex $\widetilde{\L}$.
We construct three copies of qLDPC codes defined on $\widetilde{\M}$ with qubits placed on $11$-simplices.  Let the Kunneth map be 
\begin{align}
K: & H^*(\M^{11}; \Z_2) \otimes H^*(\M'^{11}; \Z_2) \otimes H^*(\M''^{11}; \Z_2) \cr
\non &\rightarrow H^*(\widetilde{\M}, \Z_2).
\end{align}
We then use the above map to express the following cohomology classes in terms of the tensor product of cocyles in each constituent manifold ($\M^{11}$, $\M'^{11}$ and $\M''^{11}$):
\begin{align}\label{eq:cocycle_class1}
\alpha^{11} =& K(\as^4 \otimes {{\as^{*}}'}^7 \otimes \cs''^0 ) \equiv  \as^4 \otimes {{\as^{*}}'}^7 \otimes \cs''^0    \cr
\beta^{11} =& K(\cs^0  \otimes \as'^4  \otimes  {{\as^{*}}''}^{7}) \equiv
\cs^0  \otimes \as'^4  \otimes  {{\as^{*}}''}^{7}
\cr
\gamma^{11} =& K({\as^{*}}^7  \otimes \cs'^0 \otimes {{\as}''}^{4}) \equiv {\as^{*}}^7  \otimes \cs'^0 \otimes {{\as}''}^{4},  
\end{align}
where we have omitted $K$ for conciseness.  
These cocycles will be the support of the logical-$X$ operators.  

The logical-$Z$ operators are supported on the conjugate cycles:
\begin{align}\label{eq:cycle_class1}
\alpha_{11} =& \as_4 \otimes {{\as_7^{*}}'} \otimes \cs''_0    \cr
\beta_{11} =& \cs_0 \otimes \as'_4  \otimes {{\as_7^{*}}''}  \cr
\gamma_{11} =& \as_7^{*} \otimes \cs_0' \otimes \as_4''. 
\end{align}
Note that in our notation cycles and cocycles with the same label, such as $(\alpha_{11}, \alpha^{11})$, $(\as_4, \as^4)$ are conjugate pairs, which satisfy the intersection condition in Eq.~\eqref{eq:conjugate_intersection}. In addition, $\as^*_7$ is the Poincar\'e dual cycle of the cocycle $\as^4$ with complementary dimension (7+4=11), which will be denoted by $\as^4 \sim \as^*_7$. Their support on the manifold is essentially the same, with one in the original triangulation $\L$ and the other in the dual triangulation $\L^*$. Similarly, ${\as^{*}}^7$ is the Poincar\'e dual cocycle of the cycle $\as_4$.   We also say ${\as^{*}}^7$ and $\as^4$ are a pair of dual cocycles with complementary dimensions, which has the following cup product  (intersection) property:
\be
\int_{\M} \as^4 \cup {\as^{*}}^7 = |\as_7^{*} \cap \as_4|=1.
\ee
Similarly, we say $\as_7^{*}$ and $\as_4$ are a pair of dual cycles.
 
 Note that $\alpha^{11}$, $\beta^{11}$, and $\gamma^{11}$ and their conjugate cycles $\alpha_{11}$, $\beta_{11}$, and $\gamma_{11}$ are not the only contributions to the cocycles and cycles of dimension 11 when using the Kunneth formula.   

For example, the following types of cocycle classes and their conjugate cycle classes will have the dimension equaling 11:
\begin{align}
& {\cs^{*}}^{11} \otimes \cs'^0 \otimes \cs''^0, \quad  \cs^0 \otimes {{\cs^*}'}^{11} \otimes \cs''^0, \quad \cs^0  \otimes \cs'^0 \otimes {{\cs^*}''}^{11};  \cr
\non & \cs^{*}_{11} \otimes \cs'_0 \otimes \cs''_0, \quad  \cs_0 \otimes {\cs_{11}^{*'}} \otimes \cs''_0, \quad \cs_0  \otimes \cs'_0 \otimes {\cs_{11}^{*''}}.
\end{align}
Here,  $\cs^0$ is the emergent 0-cocycle which is the Poincar\'e dual of the $11$-cycle $\cs^*_{11}$ wrapping around the entire manifold $\M^{11}$, while ${\cs^{*}}^{11}$ is the dual $11$-cocycle of the 0-cycle $\cs_0$.  Similar relations hold for the second and third manifolds $\M'^{11}$ and $\M''^{11}$.  The main issue of the above three cycle classes is that the corresponding systoles $sys_{11}(\widetilde{\M}; \ZZ_2)$ have size $O(n)=O(N^{\frac{1}{3}})$ due to the presence of two 0-cycles with $O(1)$ size, e.g., $\cs'_0$ and $\cs''_0$, in the product, which is smaller than the desired code distance $\Omega(N^{\frac{2}{3}})$.   

In order to resolve this issue, we can use a subsystem-code encoding which selects only  a subset of cycles and their conjugate co-cycles to encode logical $Z$ and $X$ operators, while treating the rest of logical degree of freedom associated with logical operators of shorter cycles as \textit{gauge qubits}.

To gain further insight, consider the example shown in Fig.~\ref{fig:sub_system_encoding}.  We start with a torus $T^2$ having 1-systole of size $L$, i.e., $\text{min}(|\alpha_1|)=\text{min}(|\beta_1|)=L$, where $\alpha_1$ and $\beta_1$ represent the longitudinal and meridian 1-cycles. Modify this surface by excising two small disks and attaching a narrow handle between them, yielding a genus-2 surface $\Sigma_2$. This modification introduces an additional pair of 1-cycles, $\alpha'_1$ and $\beta'_1$, each of constant length $O(1)$. A natural homology basis for $\Sigma_2$ is thus $\{\alpha_1, \beta_1, \alpha'_1, \beta'_1 \}$.

This setup supports four logical qubits, with each qubit defined by a pair of dual logical operators supported on intersecting basis cycles—e.g., $\lo{X}_{\alpha_1}$ and $\lo{Z}_{\beta_1}$, where $|\alpha_1 \cap \beta_1| = 1$, as depicted in Fig.~\ref{fig:sub_system_encoding}. Due to the existence of the short cycles $\alpha'_1$ and $\beta'_1$, the minimal nontrivial cycle on the surface now has length $O(1)$, so the 1-systole of $\Sigma_2$ is also $O(1)$. If one were to define a conventional subspace code over the full homology group $H_1(\Sigma_2; \mathbb{Z}_2)$, the resulting code would have four logical qubits and a code distance $d = O(1)$.

However, by interpreting the logical qubits associated with $\alpha'_1$ and $\beta'_1$ as \textit{gauge qubits}—which do not store protected quantum information—we can restrict our attention to a subsystem code supported only on the long cycles $\alpha_1$ and $\beta_1$. The subsystem code has two logical qubits with operator pairs $(\lo{X}_{\alpha_1}, \lo{Z}_{\beta_1})$ and $(\lo{Z}_{\alpha_1}, \lo{X}_{\beta_1})$. Importantly, any Pauli error confined to the short cycles does not intersect the operators on the long cycles and therefore cannot induce logical errors in this subsystem encoding. As a result, the effective code distance of the subsystem code remains $d = L$, determined by the lengths of the large cycles.

\begin{figure}[hbt]
\includegraphics[width=1\columnwidth]{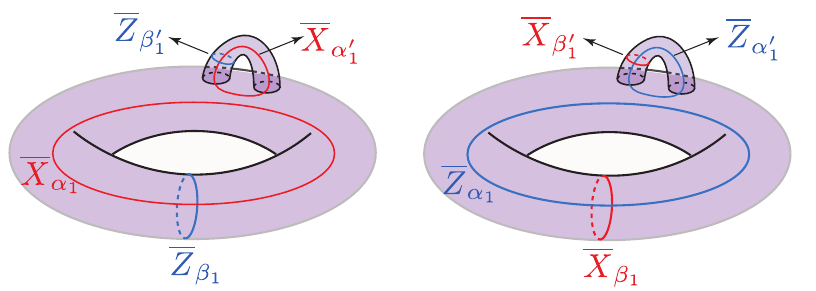}
\caption{An illustration of the subsystem encoding for a homological code defined on a manifold. Given a chosen homology basis, two logical qubits are encoded using a dual pair of intersecting cycles, $\alpha_1$ and $\beta_1$, each of length $O(L)$. Another two logical qubits are encoded using a second dual pair, $\alpha'_1$ and $\beta'_1$, whose cycle lengths are only $O(1)$. By designating the qubits associated with the short cycles as gauge qubits, quantum information is not stored in them. As a result, the remaining two logical qubits retain a large subsystem code distance of $O(L)$.}\label{fig:sub_system_encoding}
\end{figure}

For general situations, we have the following lemma:
\begin{lemma}\cite{zhu2025topological}\label{lemma:subsystem}
    For a homological quantum code defined on the triangulation of a  $k$-manifold $\M^k$, one can define a subsystem code by associating the logical-$Z$ operators with a subset of an $i^\text{th}$ homology basis $\{\alpha_i\}$ and the conjugate logical-$X$  operators on the dual subset of $(k-i)^\text{th}$ homology basis $\{\beta^*_{k-i}\}$ satisfying the intersection relation $|\alpha_i \cap \beta^*_{k-i}|= \delta_{\alpha, \beta}$.  The distance of the subsystem code is hence $d=\text{min}(\text{min}\{|\alpha_i|\}, \text{min}\{|\beta^*_{k-i}|\})$.
\end{lemma}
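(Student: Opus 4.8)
The plan is to establish three things in order: (i) that the chosen logical operators genuinely act on the subsystem degrees of freedom and satisfy the stated commutation relations; (ii) that one can promote the complementary logical operators (those on the excluded basis elements) to the gauge group without spoiling the CSS/LDPC structure; and (iii) that the resulting subsystem distance equals the minimum weight claimed. The key point throughout is that Poincaré duality on the triangulated manifold $\M^k$ gives a perfect pairing $H_i(\M^k;\ZZ_2)\times H_{k-i}(\M^k;\ZZ_2)\to\ZZ_2$, $([\alpha_i],[\beta^*_{k-i}])\mapsto |\alpha_i\cap\beta^*_{k-i}|$, so that for any chosen homology basis $\{\alpha_i\}$ one can select the \emph{dual} basis $\{\beta^*_{k-i}\}$ with $|\alpha_i\cap\beta^*_{k-i}|=\delta_{\alpha,\beta}$ — exactly the intersection relation in the statement. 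This is the analogue of Eq.~\eqref{eq:conjugate_intersection} specialized to the manifold triangulation.

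First I would set up the homological code itself: qubits on $i$-simplices, so $d_Z=sys_i(\M^k;\ZZ_2)$ and $d_X=sys^i(\M^k;\ZZ_2)=sys_{k-i}(\M^k;\ZZ_2)$ as recorded in Sec.~\ref{sec:systole}. The full set of logical qubits is in bijection with a basis of $H_i(\M^k;\ZZ_2)\cong H^i(\M^k;\ZZ_2)$. Now partition the basis $\{\alpha_i\}$ into a \emph{logical} subset $S$ and a \emph{gauge} subset $S^c$. For $\alpha\in S$ the surviving logical operators are $\lo{Z}_{\alpha_i}=\prod_{s_i\in\alpha_i}Z(s_i)$ and $\lo{X}_{\beta^*_{k-i}}=\prod_{s_i\in\beta^*_{k-i}}X(s_i)$ with $\beta^*$ the Poincaré dual of $\alpha$; these pairwise (anti)commute according to $|\alpha_i\cap\beta^*_{k-i}|=\delta_{\alpha,\beta}$, so they generate the logical group of a code on $|S|$ qubits. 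For $\alpha\in S^c$, the gauge group $\mathcal{G}$ is generated by the stabilizer group together with $\{\lo{Z}_{\alpha_i},\lo{X}_{\beta^*_{k-i}}:\alpha\in S^c\}$. One checks that $\mathcal{G}$ commutes with every surviving logical operator (again by the intersection relation: a gauge cycle in $S^c$ has zero intersection with a dual logical cocycle in $S$), so the subsystem code is well defined, and it inherits bounded stabilizer weight from the homological code since the gauge \emph{generators} we add are themselves low weight (the short cycles).

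Next I would compute the distance. The subsystem distance is the minimum weight of a logical operator, i.e.\ of an element of $N(\mathcal{G})\setminus\mathcal{G}$. A logical-$Z$ representative is any $i$-cycle whose homology class has nonzero component on some $\alpha\in S$; modulo the gauge group (which now includes the $S^c$ logical cycles and all $X$-stabilizers, whose homological effect is trivial), its minimal weight is $\min_{\alpha\in S}\min\{|\eta_i|:[\eta_i]\text{ has nonzero }\alpha\text{-component}\}\ge\min_{\alpha\in S}|\alpha_i|$ once one picks shortest representatives; symmetrically for logical-$X$ and the dual cycles $\beta^*_{k-i}$. Combining, $d=\min\big(\min_\alpha|\alpha_i|,\min_\alpha|\beta^*_{k-i}|\big)$. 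I expect the main obstacle to be the verification that \emph{no} low-weight operator outside the gauge group can be built by mixing short-cycle (gauge) support with long-cycle support — equivalently, that excising the short cycles from the homology basis and folding them into $\mathcal{G}$ does not leave a "shortcut" that lowers the distance. This is handled by the intersection/duality argument (any operator of weight $<L$ supported near the short cycles intersects no surviving logical operator and hence lies in $\mathcal{G}$), but making it airtight on a general triangulation — rather than on the genus-2 picture of Fig.~\ref{fig:sub_system_encoding} — is the part that needs care, and it is where one leans hardest on Theorem~\ref{theorem:FH} guaranteeing the long systoles are $\Theta(n)$ in the first place.
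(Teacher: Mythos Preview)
Your approach is essentially the same as the paper's: use the intersection pairing $|\alpha_i\cap\beta^*_{k-i}|=\delta_{\alpha,\beta}$ to establish the anticommutation relations between the surviving logical $Z$ and $X$ operators, then read off $d_Z=\min\{|\alpha_i|\}$ and $d_X=\min\{|\beta^*_{k-i}|\}$. The paper's proof (Appendix~\ref{app:proof}) is in fact considerably terser than your outline---it does not explicitly construct the gauge group or discuss the $N(\mathcal{G})\setminus\mathcal{G}$ characterization of dressed logicals---and the ``shortcut'' concern you flag (that mixing short gauge cycles with long logical cycles might produce a lower-weight logical representative) is not addressed there either; the paper simply asserts the distance formula from the anticommutation structure, so your worry, while legitimate in full generality, is not something the paper's own argument resolves.
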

See Appendix \ref{app:proof} for the proof of this lemma. \\

We now present the following theorem: 
\begin{theorem}\label{theorem:subsystem}
There exist a family of triple good subsystem code $\tilde{\C}$ defined on the triangulation of a $33$-dimensional product manifold  $\widetilde{\M}=\M^{11} \times \M'^{11} \times \M''^{11}$  with dimension $K=\Theta(N^{\frac{2}{3}})$,  code distance $d=\Omega(N^{\frac{2}{3}})$ and constant stabilizer weight $w=O(1)$,  such that a constant-depth circuit implementing the  triple cup product gives rise to $\Theta(N)$ logical CCZ gates on three copies of $\tilde{\C}$.     
\end{theorem}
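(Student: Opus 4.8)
The plan is to assemble the theorem from the three ingredients already developed in the excerpt: Theorem~\ref{theorem:FH} (to supply the constituent 11-manifolds with the right Betti numbers and systoles), the Künneth formula (to control the homology of the product $\widetilde{\M}=\M^{11}\times\M'^{11}\times\M''^{11}$), Lemma~\ref{lemma:subsystem} (to discard the short-cycle logicals), and Lemma~\ref{lemma_gate_2} (to produce the logical CCZ's from the triple cup product). First I would fix $N=\Theta(n)$ by noting $\mathrm{vol}(\widetilde{\M})=\mathrm{vol}(\M^{11})\cdot\mathrm{vol}(\M'^{11})\cdot\mathrm{vol}(\M''^{11})=\Theta(n^3)$, and checking that the product triangulation $\widetilde{\L}$ has bounded local geometry (product of bounded-geometry complexes; the number of top $33$-simplices refining a product of simplices is a dimension-dependent constant), so that placing qubits on $11$-simplices gives a genuine LDPC code with stabilizer weight $w=O(1)$.

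Next I would analyze $H_{11}(\widetilde{\M};\ZZ_2)$ via Künneth: it decomposes into summands $H_i(\M^{11})\otimes H_j(\M'^{11})\otimes H_k(\M''^{11})$ with $i+j+k=11$. Using Theorem~\ref{theorem:FH}, each $\M^{11}$ has $b_0=b_{11}=\Theta(1)$ (in fact $1$, from the double construction), $b_4=b_7=\Theta(n)$, and all other Betti numbers $O(1)$ (this is the content of the FH construction — the only large homology sits in degrees $q=4$ and $r-q=7$). I would then sort the Künneth summands into two classes: (i) the ``good'' summands that use the large $\Theta(n)$-dimensional groups in exactly the pattern of Eqs.~\eqref{eq:cocycle_class1}--\eqref{eq:cycle_class1}, e.g.\ $H_4\otimes H_7\otimes H_0$ and its cyclic permutations, whose representative cycles have size $\Theta(n)\cdot\Theta(n)\cdot\Theta(1)=\Theta(n^2)=\Theta(N^{2/3})$; and (ii) the ``bad'' summands involving two or three $0$-/$11$-factors, whose systoles are $O(n)=O(N^{1/3})$. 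The dimension count for class (i) gives $\Theta(n)\cdot\Theta(n)\cdot\Theta(1)\cdot 3 = \Theta(n^2)=\Theta(N^{2/3})$ logical qubits, so $K=\Theta(N^{2/3})$. Applying Lemma~\ref{lemma:subsystem} with $i=11$, $k-i=22$, and the subset of the homology basis being the class-(i) cycles $\{\alpha_{11},\beta_{11},\gamma_{11},\dots\}$ paired with their Poincaré-dual $22$-cocycles (equivalently $22$-cycles in $\L^*$), and treating all class-(ii) logicals as gauge qubits, yields a subsystem code with $d=\min(\min|\alpha_{11}|,\min|\beta^*_{22}|)$; I would need to verify both the $11$-systole restricted to the kept classes and the $22$-cosystole are $\Omega(N^{2/3})$, the latter from $sys^{11}(\widetilde{\M})\ge$ product of cosystoles of the factors again giving $\Theta(n^2)$, so $d=\Omega(N^{2/3})$.

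Finally, for the logical CCZ gates I would invoke Lemma~\ref{lemma_gate_2} with $n=33=q_1+q_2+q_3$, $q_1=q_2=q_3=11$, so $U=(-1)^{\int_{\widetilde{\L}}\hat{a}^{11}_{(1)}\cup\hat{a}^{11}_{(2)}\cup\hat{a}^{11}_{(3)}}$ is a constant-depth circuit of physical CCZ's that acts as a product of logical $\overline{\mathrm{CCZ}}$'s indexed by the triple intersection numbers $\int_{\widetilde{\M}}\alpha^{11}\cup\beta^{11}\cup\gamma^{11}$ of basis cocycles from the three copies. The key computation is to show this triple intersection pairing is nondegenerate on the class-(i) logicals and produces $\Theta(N)$ nonzero triples: using the Künneth description and the fact that on a product the cup product factorizes, $\int_{\widetilde{\M}}(\as^4\otimes{\as^*}'^7\otimes\cs''^0)\cup(\cs^0\otimes\as'^4\otimes{\as^*}''^7)\cup({\as^*}^7\otimes\cs'^0\otimes\as''^4)$ equals $(\int_{\M}\as^4\cup{\as^*}^7)\cdot(\int_{\M'}{\as^*}'^7\cup\as'^4)\cdot(\int_{\M''}\cs''^0\cup{\as^*}''^7\cup\as''^4)$ up to Koszul signs that are irrelevant mod $2$, and each factor is $1$ by the dual-cocycle intersection property $\int_\M\as^4\cup{\as^*}^7=1$. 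Since there are $\Theta(n)$ independent choices of the index $\as$ (and its primed/double-primed partners must match), this gives $\Theta(n)=\Theta(N^{1/3})$ diagonal logical CCZ's; more carefully, counting all triples $(\as,\as',\as'')$ of basis elements that pair nontrivially yields $\Theta(n)$ as well since the pairing forces the labels in the three slots to be determined by one free index, and summing over the three cyclic families of logical classes keeps it $\Theta(N^{1/3})$ — but counting all physical CCZ's, or equivalently all triple-intersection \emph{points}, gives $\Theta(\mathrm{vol})=\Theta(N)$, which is the $\Theta(N)$ logical-CCZ count claimed once one accounts for the full logical-operator basis rather than only the three distinguished cyclic classes. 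The main obstacle I anticipate is precisely this bookkeeping: verifying that the ``bad'' Künneth summands are all $O(1)$-dimensional and genuinely short (so they can be gauged away without reducing $K$ below $\Theta(N^{2/3})$), and then carefully extracting $\Theta(N)$ — rather than merely $\Theta(N^{1/3})$ — nontrivial logical triple intersections from the intersection form on the kept homology; this requires a clean statement of how the product's triple-intersection pairing restricts to the subsystem logicals and a careful count of its nonzero entries, which I would defer to the detailed analysis in Sec.~\ref{sec:fountain}.
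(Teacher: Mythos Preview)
Your overall strategy matches the paper's: build the constituents via Theorem~\ref{theorem:FH}, control $H_{11}(\widetilde{\M})$ by K\"unneth, invoke Lemma~\ref{lemma:subsystem} to gauge away the short-cycle summands, and use Lemma~\ref{lemma_gate_2} together with the factorization of the cup product across the K\"unneth decomposition to obtain the logical CCZ's. The dimension and distance arguments are essentially the paper's (the paper in fact extracts the stronger $d_X=\Omega(N)$ by observing that the $0$-cocycle $\cs''^0$ has support on all of $\M''^{11}$ and hence weight $\Theta(n)$, not $\Theta(1)$; but your $\Omega(N^{2/3})$ is all the theorem claims). Note the slip $N=\Theta(n)$ should read $N=\Theta(n^3)$.

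The genuine gap is the logical-CCZ count, which you correctly flag as unresolved. The resolution is simpler than you fear: the three indices are \emph{independent}, not determined by a single free choice. A kept basis cocycle $\alpha^{11}=\as^4\otimes{\as^{*}}'^7\otimes\cs''^0$ carries \emph{two} independent labels, one in $H^4(\M^{11})$ and one in $H^7(\M'^{11})$; likewise $\beta^{11}$ carries labels in $\M'^{11}$ and $\M''^{11}$, and $\gamma^{11}$ in $\M^{11}$ and $\M''^{11}$. The Poincar\'e-duality constraint $\int_{\M}\as^4\cup{\as^{*}}^7=1$ in the first factor ties one label of $\alpha$ to one label of $\gamma$; the constraint in $\M'^{11}$ ties the other label of $\alpha$ to one label of $\beta$; the constraint in $\M''^{11}$ ties the remaining label of $\beta$ to the remaining label of $\gamma$. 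This is three separate constraints on six labels, leaving three free indices --- one per constituent manifold --- each with $\Theta(n)$ choices, hence $\Theta(n)^3=\Theta(N)$ nonzero entries of the trilinear form. Your claim that ``the pairing forces the labels in the three slots to be determined by one free index'' collapses three independent duality constraints into one; and your fallback of counting physical CCZ's or geometric intersection \emph{points} in the triangulation is a different quantity from the number of logical CCZ's in the product of Eq.~\eqref{eq:logical_CCZ_higher}, which is precisely the $\Theta(n^3)$ count above.
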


\begin{proof}
As stated above, the three copies of quantum codes supported on $\widetilde{M}$  have logical-$X$ and $Z$ operators associated with cocycles and cycles with dimension 11 respectively.    

The dimension (number of logical gates) of the three identical code copies, which can be obtained by counting the sum of the dimensions of the set of 11-cocycle classes $\{\alpha^{11}\}$, $\{\beta^{11}\}$, and $\{\gamma^{11}\}$.  The dimension of the $\{\alpha^{11}\}$  can be decomposed as the product of the dimension of the set of cocycle classes in each tensor component:
\begin{align}
 \dim (\{\alpha^{11}\})=& \dim (\{\as^4\}) \cdot \dim (\{{{\as^{*}}'}^7 \}) \cdot \dim (\{{\cs''}^0\})
\cr
=& \Theta(n)  \cdot \Theta(n) \cdot 1 = \Theta(n^2) = \Theta(N^{2/3}), \cr 
\cr
\end{align}
where we have used the fact that there is only a unique 0-cocycle $\cs''^0 \sim {\cs_{11}^*}''$ which is supported on the entire manifold $\M'^{11}$.   Similarly, one also obtains
\begin{align}
 \dim (\{\beta^{11}\})=& \dim (\{{\cs^0}\}) \cdot \dim (\{{\as'}^4\}) \cdot \dim (\{{{\as^{*}}''}^{7}\})
\cr
=&  1 \cdot \Theta(n) \cdot \Theta(n)  = \Theta(n^2) = \Theta(N^{2/3}), \cr 
\dim (\{\gamma^{11}\})=& \dim ( \{{\as^{*}}^7 \}) \cdot \dim( \{ \cs'^0\}) \cdot \dim (\{{\as''}^{4}\})
\cr
=& \Theta(n) \cdot 1   \cdot \Theta(n)  = \Theta(n^2) = \Theta(N^{2/3}).
\end{align}
The total code dimension is hence
\be
K= \dim (\{\alpha^{11}\}) + \dim (\{\beta^{11}\}) + \dim (\{\gamma^{11}\})=\Theta(N^{2/3}).
\ee

We have the following lower bound on the minimal size of the subset of 11-cocycle classes $\{|\alpha^{11}|\}$:
\begin{align}
 \min \{|\alpha^{11}|\} =& \min\{|\as^4 \otimes {{\as^{*}}'}^7 \otimes \cs''^0 |\}  \cr
\ge & \min \{|\as^{4}|\} \cdot \min\{|{{\as^{*}}'}^7|\} \cdot \min \{|\cs''^0|\}  \cr
=& \Omega(n) \cdot \Omega(n) \cdot  \Omega(n) = \Omega(n^3) = \Omega(N), \cr
\end{align}
where we have used the fact that $|\cs''^0|=|{\cs_{11}^*}''|=\Omega(n)$ since $\cs''^0$ supports on the entire 11-manifold $\M'^{11}$ and the fact that the logical-$X$ and -$Z$ distances of the constituent good qLDPC codes corresponding to $\min\{|\as^{4}|\}$ and $ \min\{|{{\as^{*}}'}^7|\}$ respectively are $\Omega(n)$.   Similarly, we can get the following lower bound on the other two subsets of 11-cocycle classes 
\be
\text{min}\{|\beta^{11}|\} =\Omega(N),  \quad \text{min}\{|\gamma^{11}|\} =\Omega(N).
\ee
Therefore, the $X$-distance of the code is linear, i.e., 
\be
d_X=\min( \min \{|\alpha^{11}|\}, \text{min} \{|\beta^{11}|\}, \text{min}\{|\gamma^{11}|\}) = \Omega(N).
\ee

For the subset of 11-cycle classes $\{\alpha_{11}\}$, we have 
\begin{align}
\min \{|\alpha_{11}|\} =& \min \{|\as_4 \otimes {{\as_7^{*}}'} \otimes \cs''_0  |\}  \cr
\ge & \min\{|\as_{4}|\} \cdot \min \{|{\as_7^{*}}'|\} \cdot \min\{|\cs''_0|\}  \cr
=& \Omega(n) \cdot   \Omega(n) \cdot \Omega(1)  = \Omega(n^2) = \Omega(N^{\frac{2}{3}}), \cr
\end{align}
where we have used the fact that $|\cs''_0|=1$ since the minimal-size representative of $\cs''_0$ is just a single vertex,  and the fact that the logical-$Z$ and -$X$ distances of the constituent good qLDPC codes corresponding to $\min\{|\as_{4}|\}$ and $ \min\{|{{\as_7^{*}}'}|\}$ respectively are $\Omega(n)$. 
Similarly we have
\be
 \text{min}\{|\beta_{11}|\} =\Omega(N^{\frac{2}{3}}), \quad  \text{min}\{|\gamma_{11}|\} =\Omega(N^{\frac{2}{3}}).
\ee
Therefore, the $Z$-distance of the code is 
\be
d_Z=\min( \min \{|\alpha_{11}|\}, \text{min} \{|\beta_{11}|\}, \text{min}\{|\gamma_{11}|\}) = \Omega(N^{\frac{2}{3}}).
\ee
We hence obtain the code distance as
\be
d = \min (d_X, d_Z)= \Omega(N^{\frac{2}{3}}).  
\ee

% We first consider the simplest case that $s \neq t \neq p$.   In this case, we note that the only contributions to the cocycles and cycles of dimensions $s,t,p$ are those listed in Eq.~\eqref{eq:cocycle_class1}  and ~\eqref{eq:cycle_class1}.  We hence have the following lower bound on the cocystoles:
% \begin{align}
% sys_s(\widetilde{M}) =& \min(|\alpha^s|) = \min(|\as^{q_1} \otimes \bs'^0 \otimes {{\as^{*}}''}^{r_3-q_3}|)  \cr
% \ge & \min(|\as^{q_1}|) \otimes \min(|\bs'^0|) \otimes \min(|{{\as^{*}}''}^{r_3-q_3}|)  \cr
% =& \Omega(n) \cdot 1 \cdot  \Omega(n) = \Omega(n^2) = \Omega(N^{\frac{2}{3}}).
% \end{align}

We now apply the constant-depth circuit $U$ to obtain the collective logical CCZ.   According to Eq.~\eqref{eq:logical_CCZ_higher},  we can apply the following unitary 
\begin{align}
U =& (-1)^{\int_{\L} \hat{a}_{(1)}^{11} \cup \hat{a}_{(2)}^{11} \cup \hat{a}_{(3)}^{11}}  \cr
=& \prod_{\alpha^{11}, \beta^{11}, \gamma^{11}} \overline{\text{CCZ}}[(\alpha^{11}; 1), (\beta^{11}; 2),(\gamma^{11}; 3)]^{\int_{\L} {\alpha^{11} \cup \beta^{11} \cup \gamma^{11}}}, \cr
\end{align}
where $a^{11}_{(i)}$ is operator-valued 11-cochains (gauge fields).
Note that non-trivial logical CCZ will be applied to the logical qubits with the labels $(\alpha^{11}; 1)$, $(\beta^{11}; 2)$ and $(\gamma^{11}; 3)$ as long as the triple cup product sum $\int_{\L} {\alpha^{11} \cup \beta^{11} \cup \gamma^{11}}$ evaluates to $1$.

We can now decompose the triple cup product via the Künneth formula as follows:
\begin{align}\label{eq:triple_decomposition}
&  {\alpha^{11} \cup \beta^{11} \cup \gamma^{11}} \cr
=&   (\as^4 \otimes {{\as^{*}}'}^7 \otimes \cs''^0 ) \cup (\cs^0  \otimes \as'^4  \otimes  {{\as^{*}}''}^{7}) \cup ( {\as^{*}}^7  \otimes \cs'^0 \otimes {{\as}''}^{4}) \cr
=&  ( \as^4 \cup  \cs^0 \cup \as^{*7}) \otimes  ( {{\as^*}'}^7 \cup  \as'^4 \cup \cs'^0 ) \otimes  ( \cs''^0  \cup {{\as^{*}}''}^7 \cup {\as''}^4 )  \cr
\neq& 0.
\end{align}
The tensor component of  $ \as^4 \cup  \cs^0 \cup \as^{*7} = \as^4 \cup   \as^{*7} \cup \cs^0 $ is non-trivial due to the fact that $\as^4 \cup \as^{*7}$ is non-trivial according to the Poincar\'e duality and the fact that an additional intersection with the $0$-cocycle $\cs^0 \sim \cs_{11}$ (supported on the entire 11-manifold) is still non-trivial.  Similarly, all the three tensor components in  Eq.~\eqref{eq:triple_decomposition} is non-trivial, which gives rise to the non-trivial  triple cup product $ {\alpha^{11} \cup \beta^{11} \cup \gamma^{11}}$. 
One can then evaluate the cup prodcut sum over the 33-manifold as 
\begin{align}\label{eq:simplicial_Kunneth}
& \int_{\widetilde{\M}^{33}} {\alpha^{11} \cup \beta^{11} \cup \gamma^{11}} \cr
=&  \int_{\M^{11}} ( \as^4 \cup  \cs^0 \cup \as^{*7} ) \cdot \int_{\M'^{11}} ({{\as^*}'}^7 \cup  \as'^4 \cup \cs'^0 ) \cr
& \cdot \int_{\M''^{11}} (\cs''^0  \cup {{\as^{*}}''}^7 \cup {\as''}^4 )\cr
=& 1.
\end{align}
In the above derivation, we have used the non-trivial triple intersection within each constituent 11-manifold such as 
\be
\int_{\M^{11}}  \as^4 \cup  \cs^0 \cup \as^{*7}  = \int_{\M^{11}} \as^4 \cup \as^{*7} \cup \cs^0 = |\as^*_7 \cap \as_4 \cap \cs^*_{11}| = 1, 
\ee
which comes from the intersection between the cocycles and their Poincar\'e dual cycles  $\as^4 \sim \as^*_7$ and $\as^{*7} \sim \as_4$ at a single point, which in turn intersects with the entire constituent  11-manifold $\cs^0 \sim \cs_{11}$ at a single point.  

Since one can have $\Theta(n)$ choices for $\as^4$, $\as'^4$ and $\as''^4$ respectively to get non-trivial triple cup product, there are hence in total $\Theta(n^3)=\Theta(N)$ triple intersection points.   This gives rise to $\Theta(N)$ logical CCZ gates

\end{proof}

\section{Construction of the triple good  (subspace) codes and the corresponding exotic manifolds with systolic freedom}\label{sec:subspace}

In the last section,  we have used the simple construction of the subsystem qLDPC code to get large subsystem distance $\Omega(N^{\frac{2}{3}})$ despite the presence of smaller 11-systole of size $O(N^{\frac{1}{3}})$  due to the emergent cycle such as $\cs^*_{11} \otimes \cs'_0 \otimes \cs''_0$.  For practical purpose, this is good enough, but for conceptual understanding it would be nice to figure out whether one can still obtain $\Omega(N^{\frac{2}{3}})$ for a CSS subspace code supporting non-Clifford gates.  Geometrically, this is equivalent to requiring one to construct a manifold with $\ZZ_2$ systolic freedom which also contains $\ZZ_2$ triple intersection points.   

In the following, we will construct a product manifold $\widetilde{\M}=\M^{r_1} \times \M'^{r_2} \times \M''^{r_3} $ with total dimension $r_1+r_2+r_3=3q$, along with the corresponding product simplicial complex $\widetilde{\L} =\L^{r_1} \otimes \L'^{r_2} \otimes \L''^{r_3} $.
We then construct the subspace qLDPC code defined on $\widetilde{\M}$ with qubits placed on $q$-simplices and call them the `\textit{triple good code}'.  

First, we can obtain the following theorem 
\begin{theorem}\label{theorem:3q-manifold} 
For some dimension $q \ge 31$,  there is a triangulated $3q$-dimensional manifold $\widetilde{\M}$ with $N$ vertices and a power-law $\ZZ_2$-$(q, 2q)$-systolic freedom which has the following properties:

\begin{enumerate}
\item The triangulation has bounded geometry i.e., each vertex is adjacent to $O(1)$ simplices.
\item $vol(\widetilde{M})= \Theta(N)$.
\item  $\tilde{b}_q=dim(H_q(\widetilde{\M}; \Z_2)) = \Theta(N^{\frac{2}{3}})$.
\item $sys_q(\widetilde{\M}; \Z_2) = sys^{2q}(\widetilde{\M}; \Z_2) =\Omega(N^{\frac{2}{3}})$, \\ $sys^{q}(\widetilde{\M}; \Z_2) = sys_{2q}(\widetilde{\M}; \Z_2) =  \Omega(N^{\frac{2}{3}})$.
\item There exist $\Theta(N)$ triple intersection points for any chosen basis of $2q$-cycles. 
\end{enumerate}
\end{theorem}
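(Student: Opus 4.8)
The plan is to realize $\widetilde{\M}$ as a triple product of three Freedman--Hastings manifolds with coordinated but pairwise distinct dimension parameters, chosen precisely so that the degenerate short cycles that forced the subsystem construction of Theorem~\ref{theorem:subsystem} (such as $\cs^{*}_{11}\otimes\cs'_0\otimes\cs''_0$) no longer land in degree $q$ or $2q$. Concretely, I would start from three copies of the good qLDPC code of \cite{pkldpc22} of block length $n$ (so the product has $N=\Theta(n^3)$ vertices) and apply Theorem~\ref{theorem:FH} to each with dimension parameters $(q_i,r_i)$, $i=1,2,3$, subject to $r_1+r_2+r_3=3q$ together with the ``cyclic'' relations
\begin{align}
r_1 = q+q_1-q_3,\quad r_2 = q+q_2-q_1,\quad r_3 = q+q_3-q_2,
\end{align}
with the $q_i$ pairwise distinct, $q_i\ge 4$, and $r_i\ge 2q_i+3$. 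Gluing gives $\widetilde{\M}=\M^{r_1}\times{\M'}^{r_2}\times{\M''}^{r_3}$, whose product cellulation I would refine to a triangulation $\widetilde{\L}$ by a fixed bounded-complexity triangulation of products of simplices. Property~1 is then inherited from the factors, and counting top cells gives $vol(\widetilde{\M})=|\widetilde{\L}|_{3q}=\Theta(n^3)=\Theta(N)$ (Property~2). The arithmetic constraints above, together with the handful of ``accidental equality'' exclusions needed in the next step, are precisely what force $q$ to be taken sufficiently large (the value $q\ge 31$ in the statement).

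For Property~3 I would invoke the K\"unneth theorem over $\ZZ_2$, writing $H_q(\widetilde{\M};\ZZ_2)=\bigoplus_{d_1+d_2+d_3=q}H_{d_1}(\M^{r_1})\otimes H_{d_2}({\M'}^{r_2})\otimes H_{d_3}({\M''}^{r_3})$ and using that the only nonzero Betti numbers of each factor are $b_0=b_{r_i}=1$ and $b_{q_i}=b_{r_i-q_i}=\Theta(n)$ (Theorem~\ref{theorem:FH}). The point of the dimension choice is threefold: (i) no triple of ``large'' degrees (each lying in $\{q_i,\,r_i-q_i,\,r_i\}$) sums to $q$, so $\tilde b_q$ is not $\Theta(n^3)$; (ii) no single large degree equals $q$, so there is no $\Theta(n)$-weight ``one large, two zero'' class; (iii) only a bounded number of ``two large, one zero'' patterns sum to $q$, each contributing $\Theta(n)\cdot\Theta(n)\cdot 1$. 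Hence $\tilde b_q=\Theta(n^2)=\Theta(N^{2/3})$, and Poincar\'e duality gives $\tilde b_{2q}=\tilde b_q$. This simultaneously exhibits an explicit homology basis of $q$-cycles (and, dually, of $2q$-cycles), each a product of two ``large'' factor cycles with a single point (resp.\ a fundamental class).

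The crux is Property~4. The upper bounds $sys_q,\,sys_{2q}\le O(n^2)$ are immediate from the basis cycles just described, so everything reduces to the matching lower bounds. Given a nontrivial $q$-cycle $z$, its homology class has a nonzero K\"unneth component in some $H_{d_1}\otimes H_{d_2}\otimes H_{d_3}$, and by the analysis of Property~3 at least two of the $d_i$ are ``large'', say $d_1,d_2$; if all three are large the bound $|z|\ge\Omega(n^3)$ is immediate, so assume $d_3=0$. Pushing $z$ forward along the projection onto $\M^{r_1}\times{\M'}^{r_2}$ can only decrease its weight and keeps the chosen component nonzero, so the problem reduces to a \emph{product systolic inequality}: a nontrivial $(d_1+d_2)$-cycle in $\M^{r_1}\times{\M'}^{r_2}$ whose class has a nonzero $H_{d_1}\otimes H_{d_2}$ component has weight $\gtrsim sys_{d_1}(\M^{r_1})\cdot sys_{d_2}({\M'}^{r_2})=\Omega(n)\cdot\Omega(n)=\Omega(N^{2/3})$. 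Establishing this is the main obstacle: a general cycle in a product need not decompose as a product, and pairing against a product cocycle only certifies a single incident cell rather than a weight bound, so it should be proved by a combinatorial coarea/slicing argument in the spirit of Freedman--Meyer--Luo and Gromov \cite{Freedman_systole_2002,Gromov:1996,freedman:2020_manifold_from_code}, slicing $z$ along the fibers of the projection and feeding in the factor systole bounds of Theorem~\ref{theorem:FH} fiberwise. The same argument, or dually the identity $sys^q=sys_{2q}$ applied to cocycles, handles the $2q$-cycles. Together these give $\ZZ_2\text{-}(q,2q)\text{-}\text{SR}(\widetilde{\M})\ge\Omega(n^2\cdot n^2/n^3)=\Omega(n)=\Omega(vol(\widetilde{\M})^{1/3})$, i.e.\ a power-law $\ZZ_2$-$(q,2q)$-systolic freedom.

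Finally, for Property~5 I would select, among the $2q$-cycle basis of Property~3, three families $\{\alpha_{2q}\},\{\beta_{2q}\},\{\gamma_{2q}\}$ --- equivalently, three families of Poincar\'e-dual $q$-cocycles $\{\alpha^q\},\{\beta^q\},\{\gamma^q\}$ --- arranged so that, under the K\"unneth map, $\alpha^q\cup\beta^q\cup\gamma^q$ factors as a tensor product over the three factors, the factor-$i$ term being the cup product of a $q_i$-cocycle, its Poincar\'e-dual $(r_i-q_i)$-cocycle, and the $0$-cocycle (fundamental cocycle of the $i$-th factor). Exactly as in Eqs.~\eqref{eq:triple_decomposition}--\eqref{eq:simplicial_Kunneth}, each factor term evaluates to $1$ by Poincar\'e duality, so $\int_{\widetilde{\M}}\alpha^q\cup\beta^q\cup\gamma^q=1$; and since there are $\Theta(n)$ independent choices in each of the three factors, there are $\Theta(n^3)=\Theta(N)$ such triples. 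A change of $2q$-homology basis merely permutes these, so the count is basis-independent, yielding the $\Theta(N)$ triple intersection points and completing the verification of all five properties.
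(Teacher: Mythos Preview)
Your overall architecture matches the paper's: a triple product of Freedman--Hastings manifolds with cyclically coupled dimension parameters, K\"unneth for the Betti numbers, a ``bad-dimension'' avoidance argument for the systole lower bounds, and the same Poincar\'e-duality triple cup product count for Property~5.

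There is, however, a genuine gap in your handling of the factor homology. You assert that the only nonzero Betti numbers of $FH(q_i,r_i)$ are $b_0=b_{r_i}=1$ and $b_{q_i}=b_{r_i-q_i}=\Theta(n)$. Theorem~\ref{theorem:FH} does \emph{not} say this; the Freedman--Hastings construction in general produces spurious nontrivial (co)cycles in dimensions $1,2,r_i-1,r_i-2$ (arising from the handle attachments and the doubling), with no control on their systoles. This is precisely why the paper's proof defines the bad-dimension sets $B_1,B_2$ using the enlarged index set $\{0,1,2,r_i,r_i-1,r_i-2\}$ rather than just $\{0,r_i\}$. Your ``handful of accidental equality exclusions'' is therefore far too small, and your analysis would not actually produce the value $q\ge 31$: the paper obtains $31$ by an explicit numerical search over $I=\{p_i,s_i\}$ checking that $q,2q\notin B_1\cup B_2$, and the first solution is $\{(9,16),(12,22),(15,19)\}$. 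Without the spurious dimensions your gap analysis is simply incorrect, and both Property~3 (the upper bound on $\tilde b_q$) and Property~4 (absence of short $q$- and $2q$-cycles built from spurious factors) would fail.

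On Property~4 itself, you are right that the product systolic inequality --- that a class with at least two ``large'' K\"unneth factors forces weight $\Omega(n^2)$ --- is the crux. The paper handles this more bluntly than you do: once $q,2q\notin B_1\cup B_2$, it simply asserts that any nontrivial $q$- or $2q$-class has two large K\"unneth components and hence size $\Omega(n^2)$, without the coarea/slicing argument you sketch. Your projection-and-slicing plan is a reasonable way to make this rigorous, though note that ``pushing forward along the projection'' does not literally decrease weight for a $q$-cycle when $d_3=0$; you would instead want to slice over the third factor and use that some fiber carries a nontrivial $(d_1{+}d_2)$-class.
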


\begin{proof}
According to Theorem \ref{theorem:FH}, for any two integers $p$ and $s$ with  $p \ge 4$ and  $s \ge p +3$,  we can construct a triangulated $r$-dimensional manifold $FH(p,s)$ of bounded geometry with $n$ vertices \cite{freedman:2020_manifold_from_code}, where $r=p+s$. Recall that this manifold is constructed by modeling a good quantum LDPC code \cite{pkldpc22} with a handlebody construction.   This manifold $\M=FH(p,s)$ has $p$ and $s$ dimensional systole and cosystoles of size
\begin{align}
sys_p (\M; \Z_2)=sys^s (\M; \Z_2)= \Theta(n), \cr
sys^p (\M; \Z_2)=sys_s (\M; \Z_2)= \Theta(n),
\end{align}
 and Betti number of size
\be\label{eq:Betti_FH}
b_p  \equiv dim(H_p(\M; \Z_2)) = b_s = \Theta(n), 
\ee
where the correspondence between the quantities labeled by $p$ and $s$ is due to Poincar\'e duality.  The above relation are due to the fact that logical-$Z$ and $X$ information of the skeleton quantum cocde $\bar{\C}$ is encoded into the $p$-cycle and $s$-cycle ($p$-cocycle) respectively.    
Furthermore, $b_j$$=$$0$ for any $j \neq 0,1,2, p,s,r, r-1, r-2$, i.e., there are no nontrivial cycles or  cocycles of dimensions other than $p$, $s$, $0$ and $r$ except the possible existence of spurious 1-(co)cycles and 2-(co)cycles and their dual $(r-1)$-(co)cycles and $(r-2)$-(co)cycles.    These spurious (co)cycles can potentially lead to short (co)systoles in dimensions $1,2,r-1,r-2$, i.e., scales less than $\Theta(n)$ (or even $O(1)$ in some cases). Finally, we have $b_0=b_r=1$ since the manifold just have a single connected component.  

For a set of six positive integers $I = \{p_i, s_i\}_{i=0}^2$ (with the above conditions on $p_i$ and $s_i$), we can define the product manifold 
\begin{align}
\widetilde{\M}(I) =& FH(p_0,s_0) \times FH(p_1,s_1) \times FH(p_2,s_2)   \cr
\equiv & \M \times \M' \times \M''.
\end{align}
The product manifold still has bounded geometry since the vertex degree is only increased by a constant and property 1 is hence satisfied. We let each constitutent manifold $FH(p_i, s_i)$ having $\Theta(n)$ vertices, so the product manifold will have $N=\Theta(n^3)$ vertices.   Since the vertex degree are bounded, every vertex will only be adjacent to bounded number of $3q$-simplices (top dimension).  We hence have the combinatorial volume to be 
\be
vol(\widetilde{M})= |\widetilde{M}|_{3q}= \Theta(N)=\Theta(n^3),   
\ee
where $|\widetilde{M}|_{3q}$ counts the total number of $3q$-simplices on the triangulation of $\widetilde{M}$.  Therefore, property 2 is satisfied.

We now impose some conditions on these integers. First for some fixed integer $q$, we require that for each $i \in \{0,1,2\}$
\be
p_{i \text{ mod } 3} + s_{(i+1) \text{ mod } 3} = q.
\ee
This ensures that $\widetilde{\M}$ is a $3q$-dimensional manifold, and by the Kunneth theorem one has
\begin{align}
 & H_q(\widetilde{\M}; \Z_2) \cr
 \cong & \bigoplus_{l+m+t=q} H_{l}(\M; \Z_2)  \otimes H_{m}(\M'; \Z_2) \otimes H_{t}(\M''; \Z_2), \cr
\end{align}    
and hence the $q^\text{th}$ Betti number of $\widetilde{\M}$: 
\begin{align}
\tilde{b}_q \equiv& \dim(H_q(\widetilde{\M}; \Z_2)) = \sum_{l+m+t=q} b_l \cdot b'_m \cdot b''_t \cr
 \ge & \sum_{i=0}  b_{p_{i \text{ mod } 3}} \cdot b'_{s_{(i+1) \text{ mod } 3}} \cdot b''_0 = \Omega(n) \cdot \Omega(n) \cdot 1  \cr
             =& \Omega(n^2) = \Omega(N^{\frac{2}{3}}),
\end{align}
which leads to property 3.
Note that we have used Eq.~\eqref{eq:Betti_FH} in the second line of the above equation.

In fact, the Kunneth theorem can be used to completely describe the structure of $H^*(\widetilde{\M}; \Z_2)$. The relevant non-trivial triple cup products can be defined as follows. Let $\as^{p_i}$ be a basis $p_i$-cocycle in $H^{p_i}(FH(p_i, s_i))$ and let $\bs^{s_i}  \in H^{s_i}(FH(p_i, s_i))$ be a basis $s_i$-cocycle so that $\as^{p_i} \cup \bs^{s_i} \neq 0$, which exists by Poincar\'e duality. Let $\cs^0$ be the unique 0-cocycle of $FH(p_i, s_i)$. We hence have the non-trivial trip cup product $\as^{p_i} \cup \bs^{s_i} \cup \cs^0 \neq 0$. Let the Kunneth map be 
$$K: H^*(\M; \Z_2) \otimes H^*(\M'; \Z_2) \otimes H^*(\M''; \Z_2) \to H^*(\widetilde{\M}, \Z_2).$$
\noindent We then define three $q$-cocycles in $H^q(\widetilde{\M}, \Z_2)$ with the tensor product of cocycles from each constituent manifold as
\begin{align}\label{eq:_three_q_class}
\alpha^q =& \as^{p_0} \otimes \bs^{s_1} \otimes \cs''^0    \cr
\beta^q =&  \cs^0 \otimes {\as}^{p_1}  \otimes \bs^{s_2}  \cr
\gamma^q =&  \bs^{s_0} \otimes \cs'^0 \otimes  {\as}^{p_2}.
\end{align}

\noindent By the Kunneth theorem we have
\begin{align}
& \alpha^q  \cup \beta^q  \cup \gamma^q  \cr
=& (\as^{p_0} \cup \cs^0 \cup \bs^{s_0} )\otimes
 (\bs^{s_1}  \cup  {\as}^{p_1} \cup  \bs'^0) \otimes (\cs''^0 \cup \bs^{s_2}  \cup {\as}^{p_2}) \cr
 \neq & 0.
\end{align}

Note that due to Poincar\'e duality, for a given cocycle $\as^{p_i}$ in a cocycle basis $\{\as^{p_i}\}$, there is only a unique dual cocycle $\bs^{s_i} = {\as^{*}}^{p_i} $ that has a non-trivial cup product with $\as^{p_i}$, i.e., $\as^{p_i} \cup {\as^{*}}^{p_i}$$\neq$$0$.  Since we were free to choose the $\as^{p_i}$'s (with $\Theta(n)$ choices according to Eq.~\eqref{eq:Betti_FH}) which then fixes $\bs^{s_i}$'s, this gives $\Theta(n^3)=\Theta(N)$ non-trivial triple cup products and triple intersection points, satisfying property 5.

Finally, we need to show that the $q$- and $2q$-cosystoles (equivalently $2q$- and $q$-systoles) of $\widetilde{\M}$ have lower bound $\Omega(n^2)=\Omega(N^{\frac{2}{3}})$, i.e., property 4 of Theorem \ref{theorem:3q-manifold}. Due to the additional non-trivial (co)cycles [including the spurious (co)cycles] with dimension $0,1,2, r_i, r_i-1, r_i-2$, which do not come from the skeleton quantum code $\bar{\C}$ and can have short (co)systoles, the combination of these short (co)cycles themselves or together with the large (co)cycles with dimension $p_i$ and $s_i$ can lead to short (co)cycles of dimension $q$ or $2q$ with (co)cystoles less than $O(n^2)$ which would violate property 3 of Theorem \ref{theorem:3q-manifold}. The intuition is that when $q$ becomes large enough, the (co)homology group of most dimensions are just trivial, except the dimensions mentioned above.  Therefore, we expect that there exists some gaps in the combined short (co)cycle dimensions so that $q$ and $2q$ could stay in the gaps. 

The idea to solve this problem is that if we impose some more conditions on the set $I$, then we can control which types of classes appear in $H^q(\widetilde{\M})$ and $H^{2q}(\widetilde{\M})$. Note that any cohomology class which involves a product of at least two $p_i$ or $s_i$ dimensional terms coming from different $FH(p_i, s_i)$'s, such as $\alpha^q, \beta^q$ and $\gamma^q$ in Eq.~\eqref{eq:_three_q_class}, has an $\Omega(n^2)$ lower bound in size. So for the given set of parameters $I = \{p_i, s_i\}_{i=0}^2$ which completely determines $\widetilde{M}(I)$, the ``bad" dimensions which can have cosystoles of size smaller than $O(n^2)$ fall into the following two sets $B_1 \cup  B_2$:
\begin{widetext}
\begin{align}
B_1 = & \big\{p_i +x +y, \ s_i +x +y \ \big| \ i \in \{0,1,2\}, \ x  \in \{0,1,2, r_{(i+1) \text{ mod } 3}, r_{(i+1) \text{ mod } 3}-1, r_{(i+1) \text{ mod } 3}-2 \}, \cr
& y  \in \{0,1,2, r_{(i+2) \text{ mod } 3}, r_{(i+2) \text{ mod } 3}-1, r_{(i+2) \text{ mod } 3}-2 \} \big\};  \cr
B_2 = & \big\{x +y +z \ \big|  \ x  \in \{0,1,2, r_0, r_0-1, r_0-2 \},  y  \in \{0,1,2, r_1, r_1-1, r_1-2 \},  z  \in \{0,1,2, r_2, r_2-1, r_2-2 \}\big\},  \cr
\end{align}
\end{widetext}
where $r_i = p_i + s_i$.  The necessary condition for the parameter set $I$ to be valid is that the ``bad" dimensions do not contain $q$ and $2q$, i.e., $q, 2q \notin B_1 \cup B_2$.

We do an exhaustive numerical search through  all possible $I$ with increasing dimension of $q$ starting from the minimum dimension $q$=11.   The minimum dimension for a valid parameter set $I$ is $q$=31 with
\be
\non \big\{ \{p_1, s_1 \},  \{p_2, s_2 \},  \{p_3, s_3 \}  \big\} = \big\{  \{9, 16\},\{12, 22\}, \{15, 19\}  \big\}.
\ee
The set of bad dimensions $B_1 \cup B_2$ that could have short systoles/cosystoles in this case is
\begin{align}
& \{ 0,1,2,3,4,5,6,[\ ],9,10,11,12,13,14,15,16,17,18,19,20, \cr 
&21,22,23,24,25,26,27,28,29, [\ ], 32,33,34,35,36,37,38, \cr
& 39, 40,41,42, 43,44,45,46,47,48,49,50,51,52,53,54,55, \cr
& 56, 57,58,59,60,61, [\ ], 64,65,66,67,68,69,70,71,72,73, \cr
\non & 74, 75, [\ ], 77,78,79,80,81,82,83, 84,85,86,87,88,89,90, \cr
& 91, 92, 93 \},
\end{align}
which have some small gaps in the bad dimensions:  7-8, 30-31, 62-63, 76.  This allows $q$=31 and $2q$=62 to reside in the gaps. Note that the three constituent manifolds in the homological product construction has different dimensions, with $r_1=25$, and $r_2=r_3=34$ respectively.

The next valid dimension is $q=32$ with the parameters
\be
\non \big\{ \{p_1, s_1 \},  \{p_2, s_2 \},  \{p_3, s_3 \}  \big\} = \big\{  \{9, 17\},\{12, 23\}, \{15, 20\}  \big\}.
\ee
The next few valid dimensions are $q$$=$$33, 34, 35, 36, 37, 38, \cdots$,  and most of the dimensions (possibly all) in $q \ge 31$ will admit one or more valid parameter sets, which have the properties 
\begin{align}
sys_q(\widetilde{\M}; \Z_2) = sys^{2q}(\widetilde{\M}; \Z_2) =\Omega(n^{2})=\Omega(N^{\frac{2}{3}}), \cr sys^{q}(\widetilde{\M}; \Z_2) = sys_{2q}(\widetilde{\M}; \Z_2) =  \Omega(n^{2}) = \Omega(N^{\frac{2}{3}}),
\end{align}
satisfying property 3 in Theorem \ref{theorem:3q-manifold}.   Note that when keeping increasing the dimension $q$,  there are more and larger gaps in the ``bad dimensions" $B_1 \cup B_2$ and it will be easier to find qualified manifolds.

According to Eq.~\eqref{eq:combinatorial_SR},  we can evaluate the $\ZZ_2$-systolic ratio as 
\begin{align}
\ZZ_2\text{-}(q,2q)\text{-}\text{SR}(\widetilde{M}) =& \frac{sys_q(\widetilde{\M}) \cdot sys_{2q}(\widetilde{\M}) }{vol(\M)} \cr
=& \Omega(N^{\frac{1}{3}}) =\Omega(vol(\M)^{\frac{1}{3}}),
\end{align}
which shows a power-law $\ZZ_2$-$(q,2q)$-systolic freedom according to Definition \ref{def:power-law}.
\end{proof}

Due to the above theorem, we can hence reach the following corollary:
\begin{corollary}\label{corollary:subspace}
For some value $q\ge 31$, there exist a family of triple good (subspace)  codes $\tilde{\C}$ defined on the triangulation $\L$ of a $3q$-dimensional product manifold $\widetilde{\M}$ with dimension $K=\Theta(N^{\frac{2}{3}})$, code distance $d=\Omega(N^{\frac{2}{3}})$ and constant stabilizer weight $w=O(1)$,  such that a constant-depth circuit implementing the cohomology operation of a triple cup product gives rise to $\Theta(N)$ logical CCZ gates on three copies of $\tilde{\C}$. 
\end{corollary}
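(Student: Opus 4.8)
The plan is to read the corollary directly off Theorem~\ref{theorem:3q-manifold}, using Lemma~\ref{lemma_gate_2} to turn the geometric data into a logical-gate statement. First I would fix a valid dimension $q\ge 31$ together with the parameter set $I=\{p_i,s_i\}_{i=0}^{2}$ supplied by Theorem~\ref{theorem:3q-manifold}, which yields the triangulated $3q$-manifold $\widetilde{\M}=FH(p_0,s_0)\times FH(p_1,s_1)\times FH(p_2,s_2)$ with triangulation $\L=\L^{r_0}\otimes\L'^{r_1}\otimes\L''^{r_2}$ and all five listed properties. I then define the triple good code $\tilde{\C}$ to be the $(q,2q)$-homological CSS code of Sec.~\ref{sec:cup_product} built on $\L$: qubits on the $q$-simplices, $X$-stabilizers on the $(q-1)$-simplices, and $Z$-stabilizers on the $(q+1)$-simplices. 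By property~2 the number of top ($3q$-)simplices is $\Theta(N)$, and since the triangulation has bounded geometry (property~1) the number of $q$-simplices is also $\Theta(N)$ and every $(q\pm1)$-simplex meets only $O(1)$ of them; hence $\tilde{\C}$ has $\Theta(N)$ physical qubits and constant stabilizer weight $w=O(1)$, i.e.\ it is LDPC.

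Next I would extract the code parameters from the systolic data. The logical dimension is $K=\dim H_q(\widetilde{\M};\ZZ_2)=\tilde b_q=\Theta(N^{2/3})$ by property~3. For the distance I use the dictionary of Sec.~\ref{sec:systole}, $d_Z=sys_q(\widetilde{\M};\ZZ_2)$ and $d_X=sys^q(\widetilde{\M};\ZZ_2)$; property~4 gives both $=\Omega(N^{2/3})$, so $d=\min(d_X,d_Z)=\Omega(N^{2/3})$. It is precisely here that the substantive content of Theorem~\ref{theorem:3q-manifold}---the exhaustive ``bad-dimension'' search guaranteeing that no spurious short $q$- or $2q$-(co)cycles arise in the product---is invoked; once that theorem is granted, $\Omega(N^{2/3})$ distance for a genuine CSS \emph{subspace} code is immediate, which was the whole reason to pass to the triple product of coarsely 2D $FH$ manifolds rather than to a single manifold or to the subsystem construction of Sec.~\ref{sec:subsystem}.

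Finally I would produce the logical gates. Taking three copies $\tilde{\C}_{(1)},\tilde{\C}_{(2)},\tilde{\C}_{(3)}$ with qubits on $q$-simplices, so that $q_1=q_2=q_3=q$ and $q_1+q_2+q_3=3q=n$, Lemma~\ref{lemma_gate_2} shows that the constant-depth circuit $U=(-1)^{\int_{\L}\hat{a}^{q}_{(1)}\cup\hat{a}^{q}_{(2)}\cup\hat{a}^{q}_{(3)}}$ preserves the tensor-product code space and, expanded in a $q$-cocycle basis as in Eq.~\eqref{eq:logical_CCZ_higher}, is a product of $\overline{\text{CCZ}}$'s whose exponents are the sums $\int_{\L}\alpha^q\cup\beta^q\cup\gamma^q$. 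By Eq.~\eqref{eq:intersection_poincare} each such exponent equals the $\ZZ_2$ triple intersection number $|\alpha^*_{2q}\cap\beta^*_{2q}\cap\gamma^*_{2q}|$ of the Poincar\'e-dual $2q$-cycles, which are exactly the supports of the logical-$X$ operators; property~5 of Theorem~\ref{theorem:3q-manifold} states that $\Theta(N)$ such triples evaluate to $1$ for a fixed homology basis. Moreover, as in the proof of Theorem~\ref{theorem:subsystem}, the triples $(\alpha^q,\beta^q,\gamma^q)$ realizing a nontrivial CCZ are parametrized by the independent choices of a basis cocycle $\as^{p_i}$ (and its Poincar\'e partner ${\as^{*}}^{p_i}$) in each factor [cf.\ Eq.~\eqref{eq:_three_q_class}], so these $\Theta(N)$ logical CCZ gates act on pairwise distinct triples of logical qubits.

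The only genuinely new work is already encapsulated in Theorem~\ref{theorem:3q-manifold}; the present corollary is a translation from manifold language to code language, so I expect no serious obstacle beyond two bookkeeping checks: matching the $(q,2q)$-homological-code conventions to the systole/cosystole labels (so that $d_Z\leftrightarrow sys_q$ and $d_X\leftrightarrow sys^q$ rather than the other way round), and ensuring the triangulation $\L$ carries a total order on its vertices so that the cup product of Eq.~\eqref{eq:cup_def} is well defined---the latter can always be arranged, and a suitable ordered refinement does not change any of the asymptotic parameters above.
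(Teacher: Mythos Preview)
Your proposal is correct and follows essentially the same approach as the paper: invoke Theorem~\ref{theorem:3q-manifold} for the manifold with its five properties, read off $K$, $d$, and $w$ from properties~1--4, and then apply Lemma~\ref{lemma_gate_2} with $q_1=q_2=q_3=q$ so that Eq.~\eqref{eq:logical_CCZ_higher} together with property~5 gives the $\Theta(N)$ logical CCZ gates. The paper's own proof is terser---it only spells out the logical-gate step and leaves the code-parameter derivations implicit---so your more explicit accounting of $N$, $K$, $d$, and $w$ from bounded geometry and the systole/cosystole dictionary is a welcome addition rather than a deviation.
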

\begin{proof}
According to Lemma \ref{lemma_gate_2}, we choose the following constant-depth circuit: 
\be
U = (-1)^{\int_{\L} \hat{a}_{(1)}^{q} \cup \hat{a}_{(2)}^{q} \cup \hat{a}_{(3)}^{q}},
\ee
where $\hat{a}^q_{(i)}$ represents the operator-valued $q$-cochain (gauge fields) in the $i^\text{th}$ copy of thickened homological product code quantum code.

According to Eq.~\eqref{eq:logical_CCZ_higher}, a logical CCZ is applied to three logical qubits labeled by the basis cocycles $\alpha^q$, $\beta^q$ and $\gamma^q$ if there is a non-trivial cup product evaluation: 
\be
\int_{\L} \alpha^{q} \cup \beta^{q} \cup \gamma^q=1.
\ee
Since there are $\Theta(N)$ such non-trivial triple cup products according to Theorem \ref{theorem:3q-manifold},  we get in total $\Theta(N)$ logical CCZ's applied simultaneously. 

\end{proof}
 
Note that in the above manifold construction, one could add further constraint in ``bad" dimensions to make sure the $2q$-systole is of size $\Omega(N)$, which will leads to an $\Omega(N)$ $X$-distance in the above subspace code construction just the same as the situation in the subsystem code.  We leave that improvement into future work or an updated version of the current work. 

Finally, we note that although the constructed family of $3q$-manifolds in the above (including the 33-manifold constructed for subsystem codes) have varying topological dimensions at small scale which does affect the structures of the triangulation,  they have the same coarse dimension of  6D, since the manifolds constructed by Freedman and Hastings $FH(p_i, s_i)$ in Ref.~\cite{freedman:2020_manifold_from_code} used in the triple product  are coarsely 2D according
to Ref.~\cite{portnoy2023local}.  Therefore, this sequence of $3q$-manifolds look similar at large scale. 

\section{Logical gate structure and the magic state fountain}\label{sec:fountain}

Here, we analyze the detailed structure of the collective logical CCZ gate implemented by the constant-depth circuit $U$, and its application to the \textit{magic state fountain}.   

For concreteness, we focus on the subsystem code example based on the 33-manifold in Theorem \ref{theorem:subsystem} of Sec.~\ref{sec:subsystem}. The logical gate structure is completely isomorphic for the CSS subspace code construction in Theorem \ref{theorem:3q-manifold} and Corollary \ref{corollary:subspace} in Sec.~\ref{sec:subspace}. 

\subsection{Logical gate and triple-intersection structure}

\begin{figure*}[hbt]
\includegraphics[width=2\columnwidth]{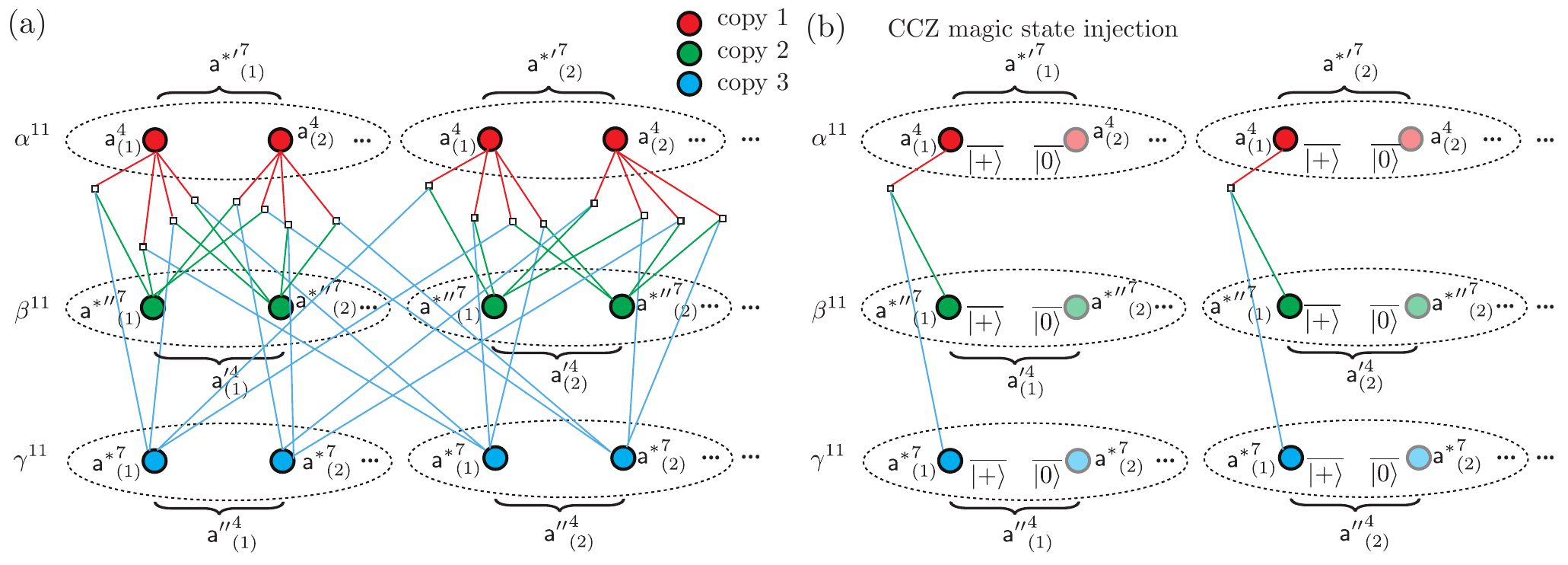}
\caption{(a) The logical gate structure represented by the interaction hypergrpah. Each vertex (circle) with red, green or blue color represent a logical qubit in the code copy 1, 2, or 3 respectively. Each hyperedge consisting of three edges and a square junction in the center represents a logical CCZ acting on the three qubits it connects to.  We see that each red logical qubit participates in logical CCZ's with $\Theta(N^{1/3}) $ green logical qubits and $\Theta(N^{1/3}) $ blue logical qubits. There are in total $\Theta(N)$ logical CCZ's. (b) To prepare logical CCZ magic states, one can initialize $\Theta(N^{1/3})$ red, green and blue logical qubits in the logical plus state $\lo{\ket{+}}$ and others in the $\lo{\ket{0}}$ state (dim circles).   In this way, one effectively turns off the hyperedges coupled to $\lo{\ket{0}}$. There are hence $\Theta(N^{1/3})$ non-overlapping hyperedges and logical CCZ's. This allows the preparation of $\Theta(N^{1/3})$ non-overlapping logical CCZ magic states.  }\label{fig:interaction-hypergraph_triple} 
\end{figure*}

The logical CCZ structure is completely determined by the triple instersection structure in the underlying manifold, which can be described by an interaction hypergraph in Fig.~\ref{fig:interaction-hypergraph_triple}(a), as has been introduced previously in Ref.~\cite{zhu2023non} and \cite{zhu2025topological}.  The vertices (circles) represent logical qubits labeled by the basis cocycles $\alpha^{11}$,  $\beta^{11}$, and $\gamma^{11}$ respectively.  The vertices in the three copies of quantum codes are repsented by red, green and blue respectively.  Each hyperedges composed of three edges (red, green and blue) meeting at a junction (square) connects to three  logical qubits in the three copies of codes. 

Recall that, according to Eq.~\eqref{eq:cocycle_class1},  one can decompose each basis cocycle  into a tensor product of three component cocycles in each constituent manifold, e.g., $\alpha^{11} = \as^4 \otimes {{\as^{*}}'}^7 \otimes \cs''^0$. Since the 0-cocycle $\cs''^0$ is unique, we can efficetively label $\alpha^{11}$ with two labels $\as^4_{(i)}$ and ${{\as^{*}}'}^7_{(i)}$, where $i$ indexes different elements in the cocycle basis.  In Fig.~\ref{fig:interaction-hypergraph_triple}, we use a dashed circle to represent a group of logical qubits with the same ${{\as^{*}}'}^7_{(i)}$ label, while the other label $\as^4_{(i)}$ can have $\Theta(n)$ choices. There exists   $\Theta(n)$ such circles.  We also represent $\beta^{11}$ and $\gamma^{11}$ with two labels in a similar way.   

There exist $\Theta(n^3)=\Theta(N)$  hyperedges which represent the same number of triple intersection points.  As illustrated in Fig.~\ref{fig:interaction-hypergraph_triple}(a), each logical qubit labeled by $\alpha^{11}$ (red) interacts with $\Theta(n)$ logical qubits labeled by $\beta^{11}$ (green) and $\Theta(n)$ logical qubits labeled by $\gamma^{11}$ (blue).  The detailed structure is completely determined by Poincar\'e duality. For example, the logical qubits with the label $\as^4_{(i)}$ only interact with those labeled by its dual cocycle ${\as^*}^7_{(i)}$; those with the label ${{\as^{*}}'}^7_{(i)}$ only interact with those labeled by its dual cocycle $\as'^4_{(i)}$.

\subsection{Application to the magic state fountain}

Now we apply the collective logical gates for the purpose of single-shot injection of 
a vast number of magic states, dubbed as the \textit{magic state fountain} \cite{zhu2023non, zhu2025topological}. 

We denote the logical CCZ magic states on three logical qubits with cocycle label $\alpha^{11}$, $\beta^{11}$ and $\gamma^{11}$  as
\be
\overline{\ket{\text{CCZ}}}_{\alpha, \beta, \gamma} := \lo{\text{CCZ}}_{\alpha, \beta, \gamma} \lo{\ket{+++}}_{\alpha, \beta, \gamma},
\ee
where the three basis cocycles satisfy the triple intersection condition $\int_{\M^{11}} \alpha^{11} \cup \beta^{11} \cup \gamma^{11}=1$.   In order to inject the logical CCZ magic states supported on non-overlapping set of logical qubits,  we can initialize only $\Theta(n)= \Theta(N^{\frac{1}{3}})$ logical qubits in the $\lo{\ket{+}}$ state, while keeping all the other logical in the $\lo{\ket{0}}$ state. This effectively turn off all the hyperedges (triple intersections) that connects to logical qubit in the $\lo{\ket{0}}$ state and preserve $\Theta(n)= \Theta(N^{\frac{1}{3}})$ of them.  When  applying the constant depth circuit $U$,  we have then injected  $\Theta(N^{\frac{1}{3}})$  logical magic states fault-tolerantly with a $\Omega(N^{\frac{2}{3}})$ distance  ($\Theta(N)$ $X$-distance) in a single shot.   In contrast the topological color code can only inject $1$ logical magic state with  distance $O(N^{\frac{1}{3}})$.

\section{Discussion and Outlook}
In summary, the current paper has achieved the first family of quantum codes that admit transverse non-Clifford gates with distance $\Omega(N^{\frac{2}{3}})$ ($X$-distance $\Theta(N)$), overcoming the $\sqrt{N}$ distance barrier. The dimension of the code is $\Theta(N^{\frac{2}{3}})$ and can be used to fault-tolerantly prepare $\Theta(N^\frac{1}{3})$ logical qubits in a single shot.  This also leads to the discovery of a family of exotic coarsely 6D manifolds with power-law $\ZZ_2$ systolic freedom and the coexistence of triple intersections between cycles with large systoles.

The subsystem code constructed in this work has dimension 33. Nevertheless, since the subsystem code idea considers the logical qubits corresponding to short cycles as gauge qubits, one can try to further lower the dimension of the construction.   The minimal dimension of 11 in the Freedman-Hastings mapping from quantum codes to manifolds in Ref.~\cite{freedman:2020_manifold_from_code} is mainly to avoid short spurious cycles in lower dimensions. Indeed, Ref.~\cite{zhu2025topological} has constructed a 4-dimensional manifold from classical codes and takes homological product of it to construct the susbystem qLDPC codes with constant rate and power-law distance.  Similarly, one could also try to lower the dimension of the manifold mapped from the quantum code to 5D by putting the qubits on 2-simplices. By taking the homological product of three 5-manifolds will lead to a subsystem code defined on a 15D triangulated manifold. We leave this exploration to future work or an updated version of the current work. We also note that although the dimension of the manifold is lowered, it still remains coarsely 6D since the geometry at large scale has not changed.  

Although the current work focuses on asymptotic scaling, the technique of constructing large-distance code is quite general and is applicable to small or midsized codes.   For example, one could choose the input quantum code for the manifold construction to be some finite-size codes with large distance such as the bivariate-bicycle code \cite{Bravyi:2024wc}, and the constructed product code will also inherit the large distance in the input code.

Finally, another more important future direction will be further pushing the distance to linear and achieving asymptotically good qLDPC codes with transversal non-Clifford gates.

\vspace{0.5in}

\noindent{\it Acknowledgements} --- We thank Michael Freedman and Elia Portnoy for insightful discussion and the original idea of the triple product construction with three good qLDPC codes.  We also thank Andrew Cross, Shehryar Sikander, Ben Brown, Po-Shen Hsin, Ryohei Kobayashi, and Maissam Barkeshli for the previous collaboration on related projects. G.Z. is supported by the U.S. Department of Energy, Office of Science, National Quantum Information Science Research Centers, Co-design Center for Quantum Advantage (C2QA) under contract number DE-SC0012704.

\begin{appendix}
\section{Proof of Lemma \ref{lemma:subsystem}}\label{app:proof}
\begin{proof}
Due to the intersection pairing $|\alpha_i \cap \beta^*_{k-i}| = \delta_{\alpha,\beta}$, a logical $Z$ operator supported on the $i$-cycle $\alpha_i$ anticommutes solely with the logical $X$ operator supported on its dual $(k{-}i)$-cycle $\alpha^*_{k-i}$. This anticommutation relation is expressed as:
\begin{equation}
\lo{Z}_{\alpha_i} \lo{X}_{\alpha^*_{k-i}} = -\lo{X}_{\alpha^*_{k-i}} \lo{Z}_{\alpha_i}.
\end{equation}
As a result, an $X$-type error must wrap around the dual cycle $\alpha^*_{k-i}$ in order to flip the eigenvalue of the logical operator $\lo{Z}_{\alpha_i}$. Conversely, a $Z$-type error supported on the cycle $\alpha_i$ is required to flip the eigenvalue of the logical operator $\lo{X}_{\alpha^*_{k-i}}$.

It follows that the $Z$-distance of the subsystem code is determined by the shortest representative among the set of basis $i$-cycles $\{\alpha_i\}$, that is,
\begin{equation}
d_Z = \min\{|\alpha_i|\},
\end{equation}
and similarly, the $X$-distance is given by the minimal size among the dual basis $(k{-}i)$-cycles $\{\beta^*_{k-i}\}$,
\begin{equation}
d_X = \min\{|\beta^*_{k-i}|\}.
\end{equation}
Therefore, the overall code distance of the subsystem code is the minimum of the two:
\begin{equation}
d = \min(d_X, d_Z) = \min\left(\min\{|\alpha_i|\}, \min\{|\beta^*_{k-i}|\}\right).
\end{equation}
\end{proof}

\end{appendix}

\bibliography{mybib_merge.bib}

\end{document}